\theoremstyle{plain}
\newtheorem{theorem}{Theorem}[section]
\newtheorem{lemma}[theorem]{Lemma}
\theoremstyle{definition}
\theoremstyle{remark}
\newcommand\distribeq{\mathrel{\overset{\makebox[0pt]{\mbox{\normalfont\tiny\sffamily d}}}{=}}}
\newcommand\distribapprox{\mathrel{\overset{\makebox[0pt]{\mbox{\normalfont\tiny\sffamily d}}}{\approx}}}
\providecommand{\keywords}[1]
{
  \small	
  \textbf{\textit{Keywords---}} #1
}
\newcommand{\bfzero}{\mathbf 0}
\newcommand{\bfa}{\mathbf a}
\newcommand{\bfb}{\mathbf b}
\newcommand{\bff}{\mathbf f}
\newcommand{\bfI}{\mathbf I}
\newcommand{\bfK}{\mathbf K}
\newcommand{\bfL}{\mathbf L}
\newcommand{\bfr}{\mathbf r}
\newcommand{\bfp}{\mathbf p}
\newcommand{\bfP}{\mathbf P}
\newcommand{\bfS}{\mathbf S}
\newcommand{\bfU}{\mathbf U}
\newcommand{\bfu}{\mathbf u}
\newcommand{\bfv}{\mathbf v}
\newcommand{\bfX}{\mathbf X}
\newcommand{\bfx}{\mathbf x}
\newcommand{\bfy}{\mathbf y}
\newcommand{\bfz}{\mathbf z}
\newcommand{\bfZ}{\mathbf Z}
\newcommand{\bfbeta}{\boldsymbol{\beta}}
\newcommand{\bfmu}{\boldsymbol{\mu}}
\newcommand{\bfSigma}{\boldsymbol{\Sigma}}
\newcommand{\bfepsilon}{\boldsymbol{\epsilon}}
\newcommand{\calA}{\mathcal A}
\newcommand{\calGP}{\mathcal{GP}}
\newcommand{\calN}{\mathcal N}
\newcommand{\calO}{\mathcal O}
\newcommand{\calP}{\mathcal P}
\newcommand{\calU}{\mathcal U}
\newcommand{\calX}{\mathcal X}
\newcommand{\bbE}{\mathbb E}
\newcommand{\bbG}{\mathbb G}
\newcommand{\bbN}{\mathbb N}
\newcommand{\bbP}{\mathbb P}
\newcommand{\bbR}{\mathbb R}
\newcommand{\hatf}{\hat{f}}
\title{Gaussian Processes Sampling with Sparse Grids\\ under Additive Schwarz Preconditioner}
\author{%
Haoyuan Chen$^{1}$\thanks{Submitted to the editors on January 31, 2024.}
\quad 
    Rui Tuo$^{1}$\\
$^1$Department of Industrial and Systems Engineering, Texas A\&M University\\ College Station, TX 77843 \\
\texttt{chenhaoyuan2018@tamu.edu}\\
\texttt{ruituo@tamu.edu}
}
\begin{document}

\maketitle

\begin{abstract}
  Gaussian processes (GPs) are widely used in non-parametric Bayesian modeling, and play an important role in various statistical and machine learning applications. In a variety tasks of uncertainty quantification, generating random sample paths of GPs is of interest. As GP sampling requires generating high-dimensional Gaussian random vectors, it is computationally challenging if a direct method, such as the Cholesky decomposition, is used. In this paper, we propose a scalable algorithm for sampling random realizations of the prior and posterior of GP models. The proposed algorithm leverages inducing points approximation with sparse grids, as well as additive Schwarz preconditioners, which reduce computational complexity, and ensure fast convergence. We demonstrate the efficacy and accuracy of the proposed method through a series of experiments and comparisons with other recent works.
\end{abstract}

\keywords{Gaussian processes, sparse grids, additive Schwarz preconditioner}


\section{Introduction}
\textit{Gaussian processes} (GPs) hold significant importance in various fields, particularly in statistical and machine learning applications \citep{rasmussen2003gaussian,santner2003design,cressie2015statistics}, due to their unique properties and capabilities. As probabilistic models, GPs are highly flexible and able to adapt to a wide range of data patterns, including those with complex, nonlinear relationships. GPs have been successfully applied in numerous domains including regression \citep{o1978curve,bishop1995neural,mackay2003information}, classification \citep{kuss2005assessing,nickisch2008approximations,hensman2015scalable}, time series analysis \citep{girard2002gaussian,roberts2013gaussian}, spatial data analysis \citep{banerjee2008gaussian,nychka2015multiresolution}, Bayesian networks \citep{neal2012bayesian}, optimization \citep{srinivas2009gaussian}, and more. This wide applicability showcases their versatility in handling different types of problems. 
As the size of the dataset increases,  GP inference suffers from computational issues of the operations on the covariance matrix such as inversion and determinant calculation.
These computational challenges have motivated the development of various approximation methods to make GPs more scalable. Recent advances in scalable GP regression include Nystr\"om approximation  \citep{quinonero2005unifying,titsias2009variational,hensman2013gaussian}, random Fourier features \citep{rahimi2007random,hensman2018variational}, local approximation \citep{gramacy2015local,cole2021locally}, structured kernel interpolation \citep{wilson2015kernel,stanton2021kernel}, state-space formulation \citep{hartikainen2010kalman, grigorievskiy2017parallelizable,nickisch2018state}, Krylov methods \citep{gardner2018gpytorch,wang2019exact}, Vecchia approximation \citep{katzfuss2021general,katzfuss2022scaled}, sparse representation \citep{ding2021sparse,chen2022kernel,chen2022scalable}, etc.

The focus of this article is on generating samples from GPs, which includes sampling from both prior and posterior processes. GP sampling is a valuable tool in statistical modeling and machine learning, and is used in various applications, such as in Bayesian optimization \citep{snoek2012practical, frazier2018bayesian, frazier2018tutorial}, in computer model calibration \citep{kennedy2001bayesian,tuo2015efficient,plumlee2017bayesian,xie2021bayesian}, in reinforcement learning \citep{kuss2003gaussian,engel2005reinforcement,grande2014sample}, and in uncertainty quantification \citep{murray2004transformations,marzouk2009dimensionality,teckentrup2020convergence}. Practically, sampling from GPs involves generating random vectors from a multivariate normal distribution after discretizing the input space. However, sampling from GPs poses computational challenges particularly when dealing with large datasets due to the need to factorize the large covariance matrices, which is an $\calO(n^3)$ operation for $n$ inputs. There has been much research on scalable GP regression, yet sampling methodologies remain relatively unexplored.
Scalable sampling techniques for GPs are relatively rare in the existing literature. A notable recent development in the field is the work of \citet{wilson2020efficiently}, who introduced an efficient sampling method known as decoupled sampling. This approach utilizes Matheron's rule and combines the inducing points approximation with random Fourier features. Furthermore, \citet{wilson2021pathwise} expanded this concept to pathwise conditioning based on Matheron's update. This method, which requires only sampling from GP priors, has proven to be a powerful tool for both understanding and implementing GPs. Inspired by this work, \citet{maddox2021bayesian} applied Matheron's rule to multi-task GPs for Bayesian optimization. Additionally, \citet{nguyen2021gaussian} applied such techniques to tighten the variance and enhance GP posterior sampling. \citet{wenger2022posterior} introduced IterGP, a novel method that ensures consistent estimation of combined uncertainties and facilitates efficient GP sampling by computing a low-rank approximation of the covariance matrix. \citet{lin2023sampling} developed a method for generating GP posterior samples using stochastic gradient descent, targeting the complementary problem of approximating these samples in $\calO(n)$ time. It is important to note, however, that none of these methods have explored sampling within a structured design, an area that possesses many inherent properties.

In this study, we propose a novel algorithms for efficient sampling from GP priors and GP regression posteriors by utilizing inducing point approximations on sparse grids. We show that the proposed method can reduce the time complexity to $\calO\big( (\eta + n)n_{sg} )$ for input data of size $n$ and inducing points on sparse grids of level $\eta$ and size $n_{sg}$. This results in a linear-time sampling process, particularly effective when the size of the sparse grids is maintained at a reasonably low level. Specifically, our work makes the following contributions:
\begin{itemize}
    \item We propose a novel and scalable sampling approach for Gaussian Processes using product kernels on multi-dimensional points. This approach leverages the inducing point approximation on sparse grids, and notably, its computational time scales \textit{linearly} with the product of the sizes of the inducing points and the sparse grids. 
    \item Our method provides a $2$-Wasserstein distance error bound with a convergence rate of $\calO( n_{sg}^{-\nu} (\log n_{sg})^{(\nu+2)(d-1)+d+1} )$ for $d$-dimensional sparse grids of size $n_{sg}$ and product Mat\'ern kernel of smoothness $\nu$. 
    \item We develop a two-level additive Schwarz preconditioner to accelerate the convergence of conjugate gradient during posterior sampling from GPs.
    \item We illustrate the effectiveness of our algorithm in real-world scenarios by applying it to areas such as Bayesian optimization and dynamical system problems.
\end{itemize}

The structure of the paper is as follows: Background information is presented in \Cref{sec:back}, our methodology is detailed in \Cref{sec:meth}, experimental
results are discussed in \Cref{sec:exp}, and the paper concludes with discussions in \Cref{sec:conc}. Detailed theorems along with their proofs are provided in \Cref{sec:theos}, and supplementary plots can be found in \Cref{sec:plots}.

\section{Background}
\label{sec:back}
This section provides an overview of the foundational concepts relevant to our proposed method. \Cref{subsec:gp review} introduces GPs, GP regression, and the fundamental methodology for GP sampling. The underlying assumptions of our work are outlined in \Cref{subsec:assump}. Additionally, detailed explanations of the inducing points approximation and sparse grids, which are key components of our sampling method, are presented in \Cref{subsec:inducing} and \Cref{subsec:sparse grid}, respectively.

\subsection{GP review}
\label{subsec:gp review}
\paragraph{GPs}
A GP is a collection of random variables uniquely specified by its mean function $\mu(\cdot)$ and kernel function $K(\cdot,\cdot)$, such that any finite collection of those random variables have a joint Gaussian distribution \citep{williams2006gaussian}. We denote the corresponding GP as $\mathcal{GP}(\mu(\cdot),K(\cdot,\cdot))$.

\paragraph{GP regression}
Let $f:\bbR^d \rightarrow \bbR$ be an unknown function. Suppose we are given a set of training points $\bfX=\{\bfx_i\}_{i=1}^n$ and the observations $\bfy=(y_1,\ldots,y_n)^{\top}$ where $y_i=f(\bfx_i)+\epsilon_i$ with the i.i.d. noise $\epsilon_i \sim \calN(0,\sigma_{\epsilon}^2)$.
The GP regression imposes a GP prior over the latent function as $f(\cdot) \sim \mathcal{GP}(\mu(\cdot),K(\cdot,\cdot))$. Then the posterior distribution at $m$ test points $\bfX^*=\{\bfx_i^*\}_{i=1}^m$ is
\begin{equation}
    p(f(\bfX^*) \vert \bfy)=\calN(\bfmu_{* \vert \bfy}, \bfK_{*,* \vert \bfy}),
\end{equation}
with
\begin{subequations}\label{eq:conditional-gp}
    \begin{equation}
    \bfmu_{* \vert \bfy}= \bfmu_* + \bfK_{*,\bfX} \big[\bfK_{\bfX,\bfX} + \sigma_{\epsilon}^2 \bfI_n \big] ^{-1}\left(\bfy-\bfmu_{\bfX}\right),\label{eq:conditional-mean}
    \end{equation}
    \begin{equation}
     \bfK_{*,* \vert \bfy} = \bfK_{*,*} - \bfK_{*,\bfX} \big[\bfK_{\bfX,\bfX} + \sigma_{\epsilon}^2 \bfI_n \big]^{-1} \bfK_{\bfX,*}, \label{eq:conditional-variance}
     \end{equation}
\end{subequations}
where $\sigma_{\epsilon}^2>0$ is the noise variance, $\bfI_n$ is a $n\times n$ identity matrix, $\bfK_{*,\bfX} = K(\bfX^*,\bfX)=\big[ K(\bfx_i^*,\bfx_j) \big]_{i=1,j=1}^{i=m,j=n}=K(\bfX,\bfX^*)^{\top}=\bfK_{\bfX,*}^{\top}$,  $\bfK_{\bfX,\bfX}=\big[K(\bfx_i,\bfx_j)\big]_{i,j=1}^n$, $\bfK_{*,*}=\big[K(\bfx_i^*,\bfx_j^*)\big]_{i,j=1}^m$, $\bfmu_{\bfX}= \big(\mu(\bfx_1),\cdots,\mu(\bfx_n)\big)^{\top}$ and $\bfmu_{*}= \big(\mu(\bfx_1^*),\cdots,\mu(\bfx_m^*)\big)^{\top}$.


\paragraph{Sampling from GPs}
\label{subsec:gp sampling}
The goal is to sample a function $f(\cdot)\sim\mathcal{GP}(\mu(\cdot), K(\cdot,\cdot))$. To represent this in a finite form, we discretize the input space and focus on the function values at a set of grid points $\bfZ=\{\bfz_i\}_{i=1}^{n_s}$. Our goal is to generate samples $\bff_{\bfZ}=\big(f(\bfz_1),\cdots,f(\bfz_{n_s})\big)^{\top}$ from the multivariate normal distribution $\calN(\mu(\bfZ),K(\bfZ,\bfZ))=\calN(\bfmu_{\bfZ},\bfK_{\bfZ,\bfZ})$. The standard method for sampling from a multivariate normal distribution involves the following steps: (1) generate a vector of samples $\bff_{\bfI}$ whose entries are independent and identically distributed normal, i.e. $\bff_{\bfI_{n_s}}\sim\calN(\bfzero,\bfI_{n_s})$; (2) Use the Cholesky decomposition \citep{golub2013matrix} to factorize the covariance matrix $\bfK_{\bfZ,\bfZ}$ as $\bfL_{\bfZ} \bfL_{\bfZ}^
{\top} = \bfK_{\bfZ,\bfZ}$; (3) generate the output sample $\bff_{\bfZ}$ as
\begin{equation}\label{eq:prior sample chol}
    \bff_{\bfZ} \leftarrow \bfL_{\bfZ} \bff_{\bfI_{n_s}} + \bfmu_{\bfZ}.
\end{equation}

Sampling a posterior GP follows a similar procedure.
Suppose we have observations $\bfy= \big(y_1,\cdots,y_n\big)^{\top}$ on $n$ distinct points $\bfX=\{\bfx_i\}_{i=1}^n$, where $y_i=f(\bfx_i)+\epsilon_i$ with noise $\epsilon_i \sim \calN(0,\sigma_{\epsilon}^2)$ and $f(\cdot) \sim \mathcal{GP}(\mu(\cdot), K(\cdot,\cdot))$. The goal is to generate posterior samples $f(\bfX^*)\mid \bfy$ at $m$ test points $\bfX^*=\{\bfx_i^*\}_{i=1}^m$. 
Since the posterior samples $f(\bfX^*) \mid \bfy \sim \calN (\bfmu_{* \vert \bfy}, \bfK_{*,* \vert \bfy})$ also follow a multivariate normal distribution, as detailed in \cref{eq:conditional-mean} and \cref{eq:conditional-variance} in \Cref{subsec:gp sampling}, we can apply Cholesky decomposition $\bfL_{* \vert \bfX} \bfL_{* \vert \bfX}^{\top} = \bfK_{*,* \vert \bfX}$ to generate GP posterior sample $\bff_{*}=f(\bfX^*)=\big(f(\bfx_1^*),\cdots,f(\bfx_m^*)\big)^{\top}$ as
\begin{equation}\label{eq:post sample chol}
    \bff_{*} \mid \bfy \leftarrow \bfL_{* \vert \bfX} \bff_{\bfI_n} + \bfmu_{* \vert \bfX},
\end{equation}
where $\bff_{\bfI_{n}} \sim \calN(\bfzero,\bfI_{n})$.

\subsection{Assumptions}
\label{subsec:assump}
A widely used kernel structure in multi-dimensional settings is ``separable'' or ``product'' kernels, which can be expressed as follows:
\begin{equation}
    \label{eq:separable-kernel}
     K(\bfx,\bfx')=\sigma^2 \prod_{j=1}^d K_0^{(j)} (x^{(j)},x'^{(j)}),
\end{equation}
where $x^{(j)}$ and $x'^{(j)}$ are the $j$-th components of $\bfx$ and $\bfx'$ respectively, $K_0^{(j)}(\cdot,\cdot)$ denotes the one-dimensional correlation function for each dimension $j$ with the variance set to one, $d$ denotes the dimension of the input points. While the product of one-dimensional kernels does not share identical smoothness properties with multi-dimensional kernels, the assumption of separability is widely adopted in the field of spatio-temporal statistics \citep{gneiting2006geostatistical,genton2007separable,constantinou2017testing}. Its widespread adoption is mainly due to its efficacy in enabling the development of valid space-time parametric models and in simplifying computational processes, particularly when handling large datasets for purposes such as inference and parameter estimation.

In this article, we consider the inputs to be $d$-dimensional points $\bfX=\{\bfx_i\}_{i=1}^n$ where each $\bfx_i \in \bbR^d$ and $d\in \bbN^+$, and we focus on product Mat\'ern kernels that employ identical base kernels across every dimension. Mat\'ern kernel \citep{genton2001classes} is a popular choice of covariance functions in spatial statistics \citep{diggle2003introduction}, geostatistics \citep{curriero2006use,pardo2008geostatistics}, image analysis \citep{zafari2020resolving,okhrin2020new}, and other applications. The Mat\'ern kernel is defined as the following form \citep{stein1999interpolation}:
\begin{equation}\label{eq:Matern-1d}
     K(x,x')= \sigma^2 \frac{2^{1-\nu}}{\Gamma(\nu)}\left(\sqrt{2\nu}\frac{\lvert x-x'\rvert}{\omega}\right)^{\nu}K_{\nu}\left(\sqrt{2\nu}\frac{\lvert x-x'\rvert}{\omega}\right),
\end{equation}
for any $x,x'\in \mathbb{R}$, where $\sigma^2>0$ is the variance, $\nu>0$ is the smoothness parameter, $\omega>0$ is the lengthscale and $K_\nu$ is the modified Bessel function of the second kind.

\subsection{Inducing points approximation}
\label{subsec:inducing}
\citet{williams2000using} first applied Nystr\"{o}m approximation to kernel-based methods such as GPs. Following this, \citet{quinonero2005unifying} presented a novel unifying perspective, and reformulated the prior by introducing an additional set of latent variables $\bff_{\bfu}$, known as \textit{inducing variables}. These latent variables correspond to a specific set of input locations $\bfu=(u_1,\cdots,u_q)^{\top}$, referred to as \textit{inducing inputs}. In the subset of regressors (SoR) \citep{silverman1985some} algorithm, the kernels can be approximated using the following expression:
	\begin{equation}\label{eq:SoR approx}
	    K_{\text{SoR}}(\bfx_i,\bfx_j) = K(\bfx_i,\bfu) K(\bfu,\bfu)^{-1} K(\bfu,\bfx_j).
	\end{equation}
Employing the SoR approximation detailed in \cref{eq:SoR approx}, the form of the SoR predictive distribution $q_{\rm SoR}(\bff_* \mid \bfy)$ can be derived as follows:
\begin{equation}\label{eq:SoR pred distrib}
    q_{\rm SoR}(\bff_* \mid \bfy) =
     \calN(\bfmu_* + \sigma_{\epsilon}^{-2} \bfK_{*,\bfu} \bfSigma_{\bfu}^{-1} \bfK_{\bfu,\bfX} (\bfy-\bfmu_{\bfX}),\;
     \bfK_{*,\bfu} \bfSigma_{\bfu}^{-1}\bfK_{\bfu,*}),
\end{equation}
where we define $\bfSigma_{\bfu} = \big[ \bfK_{\bfu,\bfu} + \sigma_{\epsilon}^{-2} 
 \bfK_{\bfu,\bfX} \bfK_{\bfX,\bfu} \big]$, $\bfK_{*,\bfu} = K(\bfX^*,\bfu)=\big[K(\bfu,\bfX^*)\big]^{\top}=\bfK_{\bfu,*}^{\top}$,  $\bfK_{\bfu,\bfu}=\big[K(\bfu_i,\bfu_s)\big]_{i,s=1}^q$, $\bfmu_{\bfX}= \big(\mu(\bfx_1),\cdots,\mu(\bfx_n)\big)^{\top}$, and $\bfmu_{*}= \big(\mu(\bfx_1^*),\cdots,\mu(\bfx_m^*)\big)^{\top}$. Consequently, the computation of GP regression using inducing points can be achieved with a time complexity of $\calO(nq^2)$ for $q$ inducing points.

To quantify the approximation power in the inducing points approach, we shall use the \textit{Wasserstein metric}. Consider a separable complete metric space denoted as $(M,d)$. The \textit{$p$-Wasserstein distance} between two probability measures $\mu_1$ and $\mu_2$, on the space $M$, is defined as follows:
	\[W_p(\mu,\nu)=\left(\inf_{\gamma\in\Gamma(\mu_1,\mu_2)}\mathbb{E}_{(x,y)\sim \gamma}d^p(x,y)\right)^{1/p},\]
	for $p\in[1,+\infty)$, where $\Gamma(\mu_1,\mu_2)$ is the set of joint measures on $M\times M$ with marginals $\mu_1$ and $\mu_2$. \Cref{theo:induce approx order} in \Cref{sec:theos} shows that the inducing points approximation of a GP converges rapidly to the target GP, with a convergence rate of $\calO(n^{-\nu})$ for the Mat\'ern kernels with smoothness $\nu$, provided that the inducing points are uniformly distributed. It is worth noting that, inequality presented in \cref{eq:rate} is also applicable to multivariate GPs.
 
 Therefore, the convergence rate for multivariate inducing points approximation can be established, assuming the corresponding convergence rate of GP regression is known. A general framework to investigate the latter problem is given by \citet{wang2020prediction}.
Inspired by \Cref{theo:induce approx order}, we assume that the vector $\bfy=(y_1,\cdots,y_n)^{\top}$ is observed on $n$ distinct $d$-dimensional points $\bfX=\{\bfx_i\}_{i=1}^n$. The inducing inputs $\bfU$, as defined in \Cref{subsec:inducing}, are given by a full grid, denoted as $\bfU=\times_{j=1}^d\bfu^{(j)}$, where each $\bfu^{(j)}$ is a set of $q_j$ one-dimensional data points, $A \times B$ denotes the \textit{Cartesian product} of sets $A$ and $B$, accordingly the grid $\bfU$ has $q=\prod_{j=1}^d q_j$ points. Given the separable structure as defined in \cref{eq:separable-kernel} and utilizing the full grid $\bfU$, we can transform the multi-dimensional problem to a one-dimensional problem since the covariance matrix $\bfK_{\bfU,\bfU}\colon=K(\bfU,\bfU)$ can be represented by \textit{Kronecker products} \citep{henderson1983history,saatcci2012scalable,wilson2015kernel} of matrices over each input dimension $j$:
\begin{equation}\label{eq:kernel full grid}
    \bfK_{\bfU,\bfU}=K(\bfU,\bfU) = \bigotimes_{j=1}^d K_j(\bfu^{(j)},\bfu^{(j)}).
\end{equation}
However, the size of the inducing points $q$ grows exponentially as $d$ gets large, therefore we consider using \textit{sparse grids} \citep{garcke2013sparse,rieger2017sampling} to solve the curse of dimensionality.

\subsection{Sparse grids}
\label{subsec:sparse grid}
A sparse grid is built on a nested sequence of one-dimensional points $\varnothing = \calU_{j,0} \subseteq \ldots \subseteq \calU_{j,t} \subseteq \calU_{j,t+1}\subseteq \ldots$ for each dimension $j$, $j=1,\ldots,d$. A sparse grid with level $\eta$ and dimension $d$ is a design as follows:
\begin{equation}\label{eq:sgd}
\begin{aligned}
    \calU(\eta,d) &= \bigcup_{\vv{t}\in\bbG(\eta)} \calU_{1,t_1} \times \calU_{2,t_2} \times \cdots \times \calU_{d,t_d},\\
\end{aligned}
\end{equation}
where $\vv{t}=(t_1,\ldots,t_d)$, $\bbG(\eta) = \{ \vv{t}\in\bbN^{d} \vert \sum_{j=1}^d t_j = \eta \}$. \Cref{fig:sgdesign d=2} and \Cref{fig:sgdesign d=3} show sparse grids corresponding to the \textit{hyperbolic cross points} (bisection) for two dimensional points at levels $\eta=3,4,5,6$ and for three dimensional points at $\eta=4,5,6,7$ respectively. 
Hyperbolic cross points sets, defined over the interval $[0,1]$, are detailed in \cref{eq:clenshaw curtis points}:
\begin{equation}\label{eq:clenshaw curtis points}
    \calU_{j,t_j} = \left\{ \frac{i}{2^{t_j}}: i=1,\ldots, 2^{t_j}-1 \right\}. 
\end{equation}
An operator on the sparse grid can be efficiently computed using Smolyak’s algorithm \citep{smolyak1963quadrature,ullrich2008smolyak}, which states that the predictor operator $\calP(\eta,d)$ on function $f$ with respect to the sparse grid $\calU(\eta,d)$ can be evaluated by the following formula:
\begin{equation}\label{eq:smolyak alg}
    \calP(\eta,d)(f) = \sum_{\vv{t}\in\bbP(\eta,d)} (-1)^{\eta - |\vv{t}|}  
    \binom{d-1}{\eta - |\vv{t}|} \Big( \bigotimes_{j=1}^d \calU_{j,t_j} \Big) (f),
\end{equation}
where $\vert \vv{t} \vert =\sum_{j=1}^d t_j$, $\bbP(\eta,d) = \{ \vv{t} \in \bbN^d \vert \max(d,\eta-d+1) \leq \vert \vv{t} \vert \leq \eta \}$, $\Big( \bigotimes_{j=1}^d \calU_{j,t_j} \Big)$ here represents the operator over the Kronecker product points $\bigotimes_{j=1}^d \calU_{j,t_j}$. \citet{plumlee2014fast} and \citet{yadav2022kernel} also provided the linear time computing method for GP regression based on Smolyak's algorithm. \Cref{theo:induce approx sg order} in \Cref{sec:theos} demonstrates that for a $d$-dimensional GP, the inducing points approximation converges to the target GP at the rate of $\calO( n^{-\nu} (\log n)^{(\nu+2)(d-1)+d+1} )$ for the product Mat\'ern kernels with smoothness $\nu$, given that the inducing points are sparse grids $\calU(\eta,d)$ of level $\eta$ and dimension $d$, as defined in \cref{eq:sgd}.

\begin{figure}
\centering
\includegraphics[width=\linewidth]{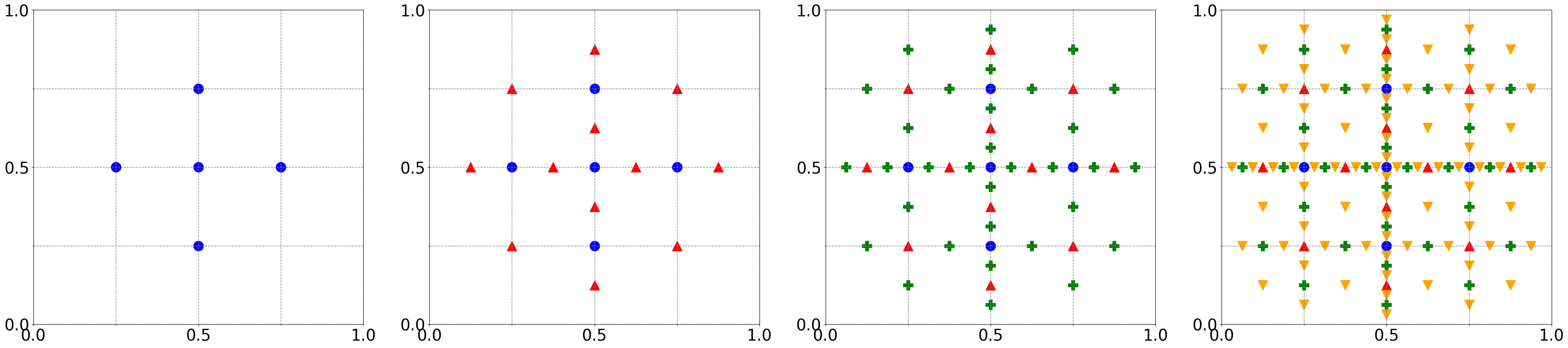}
\caption{Sparse grids $\calU(\eta,d)$ over $[0,1]^d$ of level $\eta=3,4,5,6$ and dimension $d=2$.}
\label{fig:sgdesign d=2}
\end{figure}

\section{Sampling with sparse grids}
\label{sec:meth}
In this section, we elaborate the algorithm for sampling from multi-dimensional GPs for product kernels with sparse grids.

\subsection{Prior}
\paragraph{Sampling algorithm}
Our goal is to sample from GP priors $f(\cdot) \sim \mathcal{GP}(\mu(\cdot), K(\cdot,\cdot))$ over $d$-dimensional grid points $\bfZ=\{\bfz_i\}_{i=1}^{n_s}$, $\bfz_i \in \bbR^d$. According to the conditional distributions of the SoR model in equations \cref{eq:SoR approx}, the distribution of $\bff_{\bfZ}=f(\bfZ)=(f(\bfz_1),\cdots,f(\bfz_{n_s}))^{\top}$ can be approximated by $ \bfK_{\bfZ,\calU} \bfK_{\calU,\calU}^{-1} \bff_{\calU}$ with $\bff_{\calU} \sim \calN(\bfzero, \bfK_{\calU,\calU})$, where $\bff_{\calU}=f(\calU)$, and $\calU=\calU(\eta,d)$ is a sparse grid design with level $\eta$ and dimension $d$ defined in \cref{eq:sgd}. Then it reduces to generating samplings from $ \bfK_{\calU,\calU}^{-1} \bff_{\calU} \sim \calN(\bfzero, \bfK_{\calU,\calU}^{-1})$. Apply Smolyak's algorithm to samplings from GPs (see \Cref{theo:sampling with sg inducing} in \Cref{sec:theos}), we can infer that
\begin{equation}\label{eq: prior sparse grid chol}
    \bfK_{\calU,\calU}^{-1} \bff_{\calU}
    \distribeq \sum_{\vv{t}\in\bbP(\eta,d)} (-1)^{\eta - |\vv{t}|} \binom{d-1}{\eta - |\vv{t}|} \bigotimes_{j=1}^d \text{chol}\Big( K(\calU_{j,t_j}, \calU_{j,t_j})^{-1} \Big) \bff_{\bfI_{n_{\vv{t}}}} \delta_{x}(\calU_{\vv{t}}),
\end{equation}
\begin{equation}\label{eq:dirac}
    \delta_{x}(\calU_{\vv{t}}) = \begin{cases}
        0 \quad & \text{if $x \in \calU \setminus \calU_{\vv{t}}$}  \\
        1 \quad & \text{if $x \in \calU_{\vv{t}}$}
    \end{cases}, 
\end{equation}
where $\text{chol}\Big( K(\calU_{j,t_j}, \calU_{j,t_j})^{-1} \Big)$ is the Cholesky decomposition of $K(\calU_{j,t_j}, \calU_{j,t_j})^{-1}$, $\calU=\calU(\eta,d)$ is a sparse grid design with level $\eta$ and dimension $d$ defined in \cref{eq:sgd}, $\bff_{\bfI_{n_{\vv{t}}}} \sim \calN(\bfzero,\bfI_{n_{\vv{t}}})$ is standard multivariate normal distributed, $\bfI_{n_{\vv{t}}}$ is a $n_{\vv{t}} \times n_{\vv{t}}$ identity matrix, $n_{\vv{t}}$ $=$ $\prod_{j=1}^d t_j$ is the size of the full grid design $\calU_{\vv{t}}=\calU_{1,t_1} \times \calU_{2,t_2} \times \cdots \times \calU_{d,t_d}$ associated with $\vv{t}=(t_1,\ldots,t_d)$, $\delta_{x}(\calU_{\vv{t}})$ is a Dirac measure on the set $\calU$.
Therefore, the GP prior $\bff_{\bfZ}$ over grid points $\bfZ$ can be approximately sampled via
{\color{blue}{
\begin{align}\label{eq:multi-dim prior}
    \bff_{\bfZ} \leftarrow \bfK_{\bfZ,\calU}
    \left(
    \sum_{\vv{t}\in\bbP(\eta,d)} (-1)^{\eta - |\vv{t}|} \binom{d-1}{\eta - |\vv{t}|} \bigotimes_{j=1}^d \text{chol}\Big( K(\calU_{j,t_j}, \calU_{j,t_j})^{-1} \Big) \bff_{\bfI_{n_{\vv{t}}}}
    \delta_{x}(\calU_{\vv{t}})
    \right)\nonumber\\
    + \bfmu_{\bfZ},
\end{align}
}}where $\bff_{\bfI_{n_{\vv{t}}}} \sim \calN(\bfzero,\bfI_{n_{\vv{t}}})$ is standard multivariate normal distributed, $\delta_x(\calU_{\vv{t}})$ is a Dirac measure on set $\calU$ as defined in \cref{eq:dirac}, $\text{chol}\Big( K(\calU_{j,t_j}, \calU_{j,t_j})^{-1} \Big)$ is the Cholesky decomposition of $K(\calU_{j,t_j}, \calU_{j,t_j})^{-1}$.

\paragraph{Complexity}
Since the Kronecker matrix-vector product can be written as the form of vectorization operator (see Proposition 3.7 in \citep{kolda2006multilinear}), the Kronecker matrix-vector product in \cref{eq:multi-dim prior} only costs $\calO(\sum_{j=1}^d t_j \prod_{j=1}^d t_j)$ $=\calO(\eta n_{\vv{t}})$ time, hence the time complexity of the prior sampling via \cref{eq:multi-dim prior} is $\calO\Big( (\eta+n_s) \sum_{\vv{t}\in\bbP(\eta,d)} n_{\vv{t}} \Big)$ $= \calO \Big( (\eta+n_s) n_{sg} \Big)$, 
where $\nu$ is the smoothness parameter of the Mat\'ern kernel, $d$ is the dimension of the sampling points $\bfZ=\{\bfz_i\}_{i=1}^{n_s}$, $n_s$ is the size of the sampling points, $n_{sg}$ is the size of inducing points on a sparse grid $\calU(\eta, d)$, $n_{\vv{t}}$ $=$ $\prod_{j=1}^d t_j$ is the size of the full grid design $\calU_{\vv{t}}=\times_{j=1}^d \calU_{j,t_j}$, $\vv{t}=(t_1,\ldots,t_d)$, $\bbP(\eta,d) = \{ \vv{t} \in \bbN^d \vert \max(d,\eta-d+1) \leq \sum_{j=1}^d t_j \leq \eta \}$.

Note that we can obtain the upper bound of the number of points in the Smolyak type sparse grids. According to Lemma 3 in \citep{rieger2017sampling}, given that $\vert\calU_{j,t_j}\vert \leq 2^{t_j}-1$, when sparse grids are constructed on hyperbolic cross points as defined in \cref{eq:clenshaw curtis points}, the cardinality of a sparse grid $\calU(\eta,d)$ defined in \cref{eq:sgd}, is thereby bounded by
\begin{align}
    n_{sg} = \vert \calU(\eta,d) \vert & \leq 1^d \big(\frac{2-1}{2}\big)^d 2^{\eta} \binom{\eta-1}{d-1} \min\{ \frac{2}{2-1}, \frac{\eta}{d} \} \nonumber\\
    & \leq 2^{\eta-d} \frac{\eta^{d-1}}{(d-1)!} \min\{ 2, \frac{\eta}{d} \}\nonumber\\
    & \leq 2^{\eta-d} \eta^{d} / (d!).
\end{align}

\subsection{Posterior}
\paragraph{Matheron's rule}
Sampling from the posterior GPs can be effectively conducted using \textit{Matheron's rule}. The Matheron's rule was popularized in the geostatistics field by \citet{journel1976mining}, and recently rediscovered by \citet{wilson2020efficiently}, who leveraged it to develop a novel approach for GP posterior samplings. Matheron's rule can be described as follows: Consider $\bfa$ and $\bfb$ as jointly Gaussian random vectors. Then the random vector $\bfa$, conditional on $\bfb=\bfbeta$, is equal in distribution to
\begin{equation}\label{eq:matheron rule}
    (\bfa\mid\bfb=\bfbeta) \distribeq \bfa + \mathrm{Cov}(\bfa,\bfb) \mathrm{Cov}(\bfb,\bfb) ^ {-1} (\bfbeta - \bfb),
\end{equation}
where $\mathrm{Cov}(\bfa,\bfb)$ denotes the covariance of $(\bfa,\bfb)$.

According to \cref{eq:matheron rule}, exact GP posterior samplers can be obtained using two jointly Gaussian random variables. Following this technique, we derive
\begin{equation}\label{thm:matheron for gp}
    \bff_*\mid \bfy \distribeq \bff_* + \bfK_{*,\bfX} \big[\bfK_{\bfX,\bfX} + \sigma_{\epsilon}^2 \bfI_n \big]^{-1} (\bfy - \bff_{\bfX} - \epsilon),
\end{equation}
where $\bff_{\bfX}=f(\bfX)$ and $\bff_{*}=f(\bfX^*)$ are jointly Gaussian random variables from the prior distribution, noise variates $\bfepsilon \sim \calN (\bfzero, \sigma_{\epsilon}^2 \bfI_n)$. Clearly, the joint distribution of $(\bff_{\bfX} ,\bff_*)$ follows the multivariate normal distribution as follows:
\begin{equation}\label{eq:prior joint}
    (\bff_{\bfX}, \bff_*) \sim \calN \Bigg(
    \begin{bmatrix}
        \bfmu_{\bfX}\\
        \bfmu_*\\
    \end{bmatrix},
    \begin{bmatrix}
        \bfK_{\bfX,\bfX} & \bfK_{\bfX,*}\\
        \bfK_{*,\bfX} & \bfK_{*,*} 
    \end{bmatrix} \Bigg).
\end{equation}
\paragraph{Sampling algorithm}
We can apply SoR approximation \cref{eq:SoR approx} to Matheron’s
rule \cref{eq:matheron rule} and obtain
{\color{blue}{
\begin{equation}\label{thm:matheron for inducing gp}
    \bff_*\mid \bfy 
    \distribapprox \bff_* + \sigma_{\epsilon}^{-2} \bfK_{*,\calU} \bfSigma_\calU^{-1} \bfK_{\calU,\bfX} (\bfy - \bff_{\bfX} - \bfepsilon),
\end{equation}
}}where $\bfSigma_\calU = \big[ \bfK_{\calU,\calU} + \sigma_{\epsilon}^{-2} \bfK_{\calU,\bfX} \bfK_{\bfX,\calU} \big]$, $\bff_{\bfX}=f(\bfX)$ and $\bff_*=f(\bfX^*)$ are jointly Gaussian random variables from the prior distribution, noise variates $\bfepsilon \sim \calN (\bfzero, \sigma_{\epsilon}^2 \bfI_n)$.
The prior resulting from the SoR approximation can be easily derived from \cref{eq:SoR approx}, yielding the following expression:
\begin{equation}\label{eq:SoR prior}
    q_{\rm SoR}(\bff_{\bfX},\bff_*) = \calN(\bfmu_{\bfX+*}, \; \bfK_{\bfX+*, \calU} \bfK_{\calU,\calU}^{-1} \bfK_{\calU, \bfX+*} ),
\end{equation}
where $\bfmu_{\bfX+*}=[\bfmu_{\bfX}, \bfmu_*]^{\top}$, $\bfK_{\bfX+*, \calU} = [K(\calU,\bfX),K(\calU,\bfX^*)]^{\top}$. The same as above in \cref{eq:multi-dim prior}, $(\bff_{\bfX},\bff_*)$ can be sampled from
{\color{blue}{
\begin{align}\label{eq:multi-dim prior for posteriors}
    (\bff_{\bfX},\bff_*) 
    \leftarrow
    \bfK_{\bfX+*,\calU}
    \left(
    \sum_{\vv{t}\in\bbP(\eta,d)} (-1)^{\eta - |\vv{t}|} \binom{d-1}{\eta - |\vv{t}|} \bigotimes_{j=1}^d \text{chol}\Big( K(\calU_{j,t_j}, \calU_{j,t_j})^{-1} \Big) \bff_{\bfI_{n_{\vv{t}}}}
    \delta_{x}(\calU_{\vv{t}})
    \right)\nonumber\\
    +\bfmu_{\bfX+*},
\end{align}
}}
where $\bff_{\bfI_{n_{\vv{t}}}}$, $\delta_x(\calU_{\vv{t}})$, and $\text{chol}\Big( K(\calU_{j,t_j}, \calU_{j,t_j})^{-1} \Big)$ are the same as that defined in \cref{eq:multi-dim prior}. 

\paragraph{Conjugate gradient}

\begin{algorithm}[hbt!]
\caption{The Preconditioned Conjugate Gradient (PCG) Algorithm.}
\label{alg:pcg}
\setstretch{0.99} 
\begin{algorithmic}[1]
    \STATE {\bfseries Input:} matrix $\bfSigma$, preconditioning matrix $\bfP$, vector $\bfv$, convergence threshold $\epsilon$, initial vector $\bfx_0$, maximum number of iterations $T$
    \STATE {\bfseries Output:} $\bfx_* \approx \bfSigma^{-1} \bfv$
    \STATE $\bfr_0:=\bfv-\Sigma\bfx_0$; $\bfz_0:=\bfP^{-1}\bfr_0$ \STATE $\bfp_0:=\bfz_0$
    \FOR{$i=0$ {\bfseries to} $T$}
        \STATE $\alpha_i:=\frac{\bfr_i^T \bfz_i}{\bfp_i^T \bfSigma \bfp_i}$
        \STATE $\bfx_{i+1}:=\bfx_i + \alpha_i \bfp_i$
        \STATE $\bfr_{i+1}:= \bfr_i - \alpha_i \bfSigma \bfp_i$
        \IF{$\Vert\bfr_{i+1}\Vert <\epsilon$}
            \STATE return  $\bfx_*:=\bfx_{i+1}$
        \ENDIF  
        \STATE $\bfz_{i+1}:=\bfP^{-1}\bfr_{i+1}$
        \STATE $\beta_i:=\frac{\bfr_{i+1}^T \bfz_{i+1}}{\bfr_i^T \bfz_i}$
        \STATE $\bfp_{i+1}:=\bfz_{i+1}+\beta_i \bfp_i$
    \ENDFOR
\end{algorithmic}
\end{algorithm}

The only remaining computationally demanding step in \cref{thm:matheron for inducing gp} is the computation of $\bfSigma_{\calU}^{-1}\bfv$ with $\bfv=\bfK_{\calU,\bfX} (\bfy - \bff_{\bfX} - \bfepsilon)$, requiring a time complexity of $\calO(n_{sg}^3)$ where $n_{sg}= \vert \calU(\eta,d) \vert$ is the size of the sparse grid $\calU(\eta,d)$. To reduce this computational intensity, we consider using the \textit{conjugate gradient} (CG) method \citep{hestenes1952methods,golub2013matrix}, an iterative algorithm for solving linear systems efficiently via matrix-vector multiplications. Preconditioning \citep{trefethen1997numerical,demmel1997applied,saad2003iterative,van2003iterative} is a well-known tool for accelerating the convergence of the CG method, which introduces a symmetric positive-definite matrix $\bfP_{\calU} \approx \bfSigma_{\calU}$ such that $\bfP_{\calU}^{-1} \bfSigma_{\calU}$ has a smaller condition number than $\bfSigma_{\calU}$. The entire scheme of the \textit{preconditioned conjugate gradient} (PCG)  takes the form in \Cref{alg:pcg}.

\paragraph{Preconditioner choice}
The choice of the precondition $\bfP_{\calU}$ has a trade-off between the smallest time complexity of applying $\bfP_{\calU}^{-1}$ operation and the optimal condition number of $\bfP_{\calU}^{-1} \bfSigma_{\calU}$. \citet{cutajar2016preconditioning} first investigated several groups of preconditioners for radial basis function kernel, then \citet{gardner2018gpytorch} proposed a specialized precondition for general kernels based on pivoted Cholesky decomposition. \citet{wenger2022preconditioning} analyzed and summarized convergence rates for arbitrary multi-dimensional kernels and multiple preconditioners. 
Due to the hierarchical structure of the sparse grid $\calU$, instead of using common classes of preconditioners, we consider \textit{two-level additive Schwarz} (TAS) preconidtioner \citep{toselli2004domain} for the system matrix $\bfSigma_{\calU}$, which is formed as an \textit{additive Schwarz} (AS) preconditioner \citep{dryja1994domain,cai1999restricted} coupled with an additive coarse space correction. Since the sparse grid $\calU(\eta,d)$ defined in \cref{eq:sgd} is covered by a set of overlapping domains $\{ \calU_{\vv{t}} \}_{\vv{t} \in \bbG(\eta) }$ where $\bbG(\eta) = \{ \vv{t}\in\bbN^d \vert \sum_{j=1}^d t_j = \eta \}$, the TAS precondition for the system matrix $\bfSigma_{\calU}$ is then defined as
\begin{equation}\label{eq:TAS}
    \bfP_{\text{TAS}}^{-1} :=  \sum_{ \vv{t} \in \bbG(\eta)} \bfS_{\vv{t}}^{\top} \bfP_{\vv{t}}^{-1} \bfS_{\vv{t}} + \bfS_c^{\top} \bfP_c^{-1} \bfS_c.
\end{equation}
On the right hand side of \cref{eq:TAS}, the first term  is the one-level AS preconditioner, and the second term is the additive coarse space correction. $\bfS_{\vv{t}} \in \bbR^{ n_{\vv{t}} \times n_{sg} }$ is the selection matrix that projects $\calU_{\vv{t}}$ to $\calU$, $\bfP_{\vv{t}} = \bfS_{\vv{t}} \bfSigma_{\calU} \bfS_{\vv{t}}^{\top} \in \bbR^{ n_{\vv{t}} \times n_{\vv{t}} }$ is a symmetric positive definite sub-matrix defined on $\calU_{\vv{t}}$. $\bfS_c \in \bbR^{ n_c \times n_{sg} }$ is the selection matrix whose columns spanning the coarse space, and $\bfP_c = \bfS_c \bfSigma_{\calU} \bfS_c^{\top} \in \bbR^{ n_c \times n_c }$ is the coarse space matrix. $n_c$ is the dimension of the coarse space, $n_{sg}:=\vert \calU(\eta,d) \vert$ is the size of the sparse grid $\calU(\eta,d)$, $n_{\vv{t}}:=\vert \calU_{\vv{t}} \vert =\prod_{j=1}^d t_j$ is the size of the full grid design $\calU_{\vv{t}}=\times_{j=1}^d \calU_{j,t_j}$, $\vv{t}=(t_1,\ldots,t_d)$. \Cref{fig:diagram d=2} illustrates an example of the selection matrices $\{ \bfS_{\vv{t}} \}_{\vv{t} \in \bbG(\eta)}$ for the sparse grid $\calU(\eta=3,d=2)$.

\begin{figure*}
    \centering
    \includegraphics[width=\linewidth]{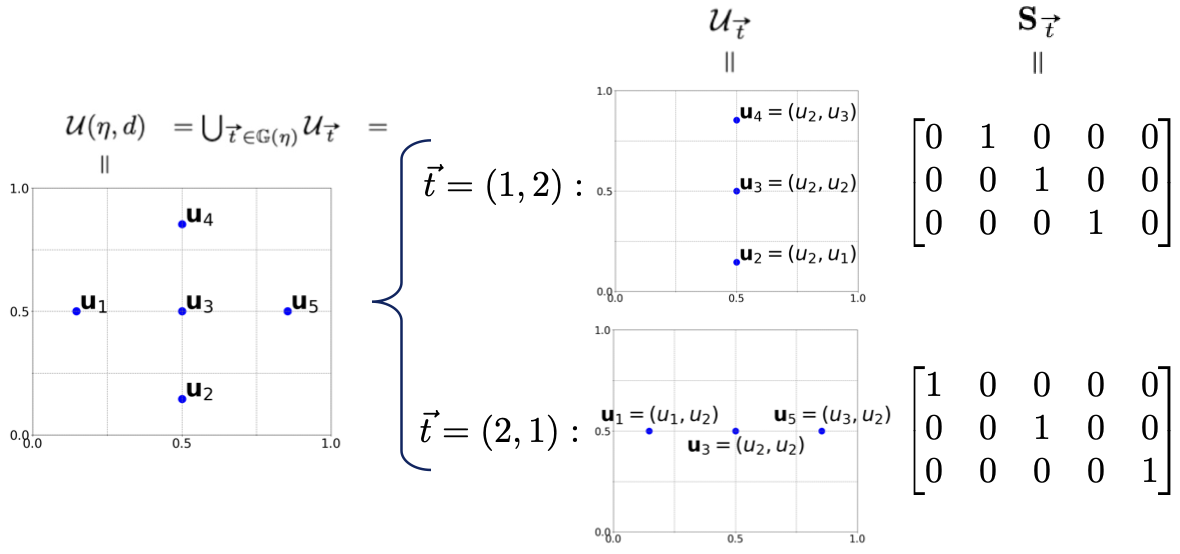}
    \caption{Selection matrices with respect to the sparse grid $\calU(\eta=3,d=2)$.}
    \label{fig:diagram d=2}
\end{figure*}

Some recent work  \citep{toselli2004domain,spillane2014abstract,dolean2015introduction,al2023efficient} provide robust convergence estimates of TAS preconditioner, which states that the condition number of $\bfP_{\text{TAS}}^{-1} \bfSigma_{\calU}$ is bounded by the number of distinct colors required to color the graph of $\bfSigma_{\calU}$ so that $span\{ \bfS_{\vv{t}} \}_{ \vv{t} \in \bbG(\eta) }$ of the same color are mutually $\bfSigma_{\calU}$-orthogonal, the maximum number of subdomains that share an unknown, and a user defined tolerance $\tau$. Specifically, given a threshold $\tau>0$, a GenEO coarse space (see Section 3.2 in \citep{al2023efficient}) can be constructed, and the following inequality holds for the preconditioned matrix $\bfP_{\text{TAS}}^{-1} \bfSigma_{\calU}$:
\begin{equation}\label{ineq: tas cond}
    \kappa( \bfP_{\text{TAS}}^{-1} \bfSigma_{\calU} ) \leq (k_c + 1) \left( 2 + (2k_c+1) \frac{k_m}{\tau} \right),
\end{equation}
where $k_c$ is the minimum number of distinct colors so that each two neighboring subdomains among $\{ \calU_{\vv{t}} \}_{\vv{t}\in\bbG(\eta)}$ have different colors, $k_m$ is the coloring constant which is the maximum number of overlapping subdomains sharing a row of $\bfSigma_{\calU}$. Since each point in $\calU(\eta,d)$ belongs to at most $\vert \bbG(\eta) \vert$ of the subdomains, $k_c=\vert \bbG(\eta) \vert$. Coloring constant $k_m$ can be viewed as the number of colors needed to color each subdomain in such a way that any two subdomains with the same color are orthogonal, therefore $k_m=\vert \bbG(\eta) \vert$.

However, the conventional approach for constructing a coarse space to bound the smallest eigenvalue of the $\bfP_{\text{TAS}}^{-1} \bfSigma_{\calU}$ relies on the \textit{algebraic local symmetric positive semidefinite} (SPSD) splitting \citep{al2019class}, which is computationally complicated. To select a coarse space that is practical and less complicated, we propose constructing it based on the sparse grid $\calU(\eta_c,d)$ with a lower level $\eta_c < \eta$. In practice, we set $\eta_c=\max\{ \lceil \eta/2 \rceil, d\}$. Although the TAS preconditioners with our proposed coarse space do not theoretically guarantee robust convergence, empirical results presented in \Cref{fig:precond} indicate that TAS tends to converge faster than one-level AS preconditioner and CG method without preconditioning. \Cref{fig:precond blowup} shows the cases where $\Sigma_{\calU}$ is ill-conditioned, resulting in the instability (blow-up) of the CG method without any preconditioning.

\begin{figure}
\centering
\begin{subfigure}[b]{0.48\textwidth}
    \includegraphics[width=\linewidth]{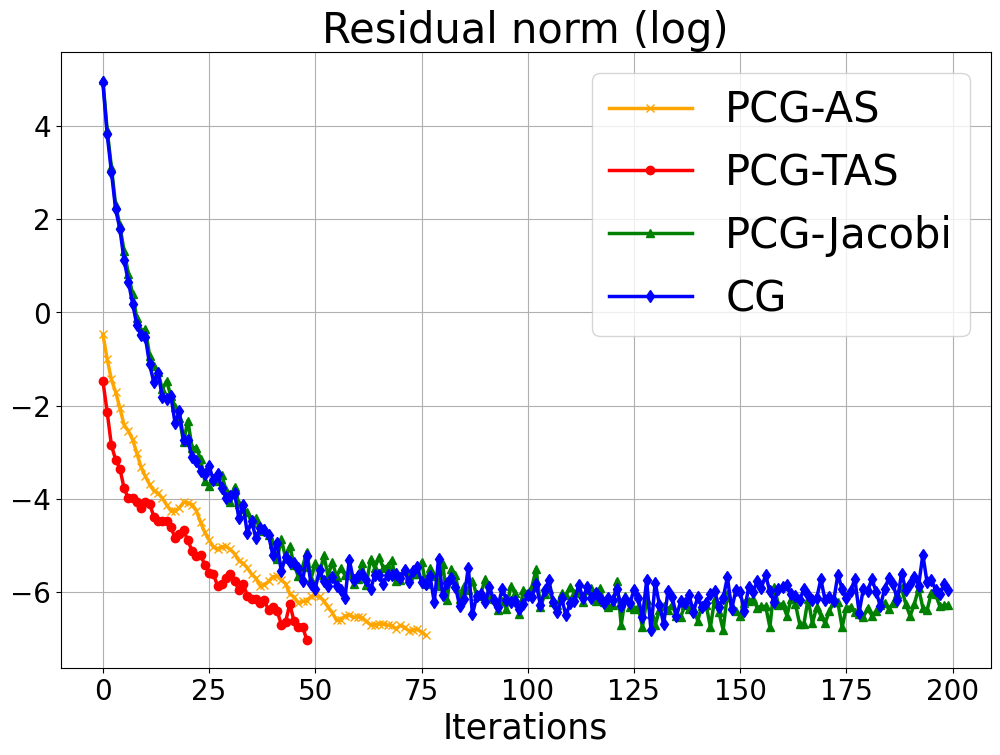}
\end{subfigure} 
\begin{subfigure}[b]{0.48\textwidth}
    \includegraphics[width=\linewidth]{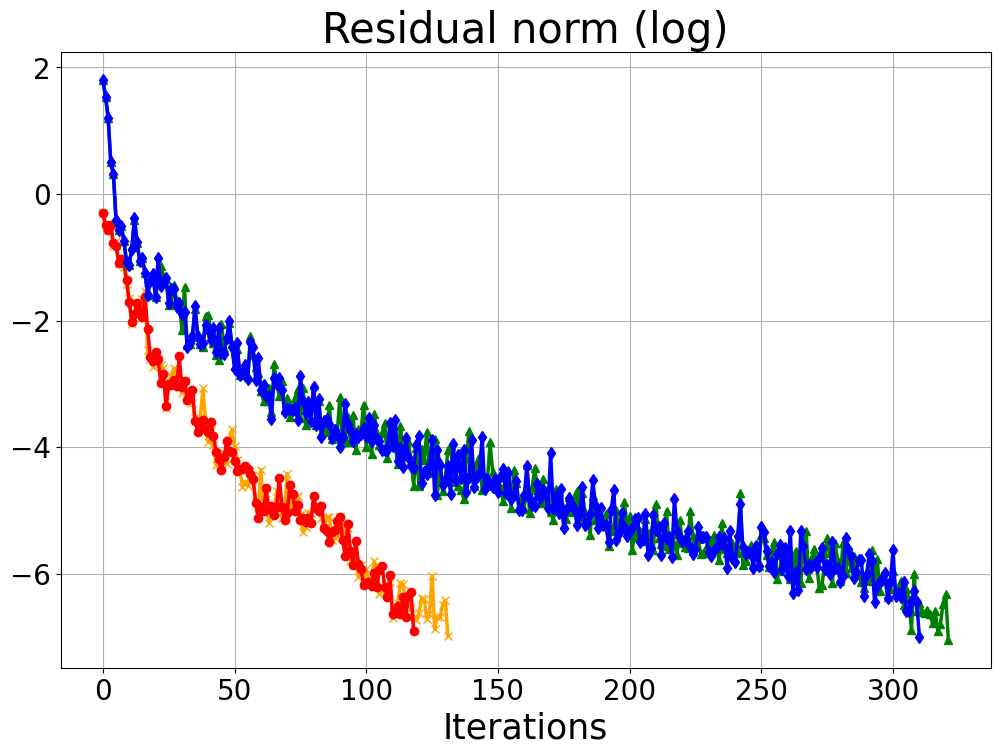}
\end{subfigure}
\caption{Residuals of PCG for $\Sigma_{\calU}^{-1} \bfv$ for tolerance $\tau=10^{-3}$ with different preconditioners where $\Sigma_{\calU}= \big[ \bfK_{\calU,\calU} + \sigma_{\epsilon}^{-2} \bfK_{\calU,\bfX} \bfK_{\bfX,\calU} \big]$, $\calU=\calU(\eta,d)$ is the sparse grid defined in \cref{eq:sgd}, vector $\bfv$ is randomly generated. $x$-axis is the iteration step, $y$-axis is the logarithm of the residuals norm. We denote PCG with TAS preconditioners by red dots, PCG with AS preconditioners by orange crosses, PCG with Jacobi preconditioners by green triangles, CG without preconditioning by blue diamonds. \textit{Left:} The sparse grid $\calU(\eta=12,d=2)$. \textit{Right:} The sparse grid $\calU(\eta=10,d=4)$.}
\label{fig:precond}
\end{figure}

\paragraph{Complexity} Note that $\bfS_{\vv{t}}$ is a sparse matrix with $n_{\vv{t}}$ nonzeros and all of its nonzeros are one, hence computing $\bfP_{\vv{t}}$ requires $\calO(n n_{\vv{t}}^2)$ operations, and the computation of the first term of $\bfP_{\text{TAS}}^{-1}$ costs $\calO( \sum_{ \vv{t} \in \bbG(\eta)} n n_{\vv{t}}^2 )$. The computational complexity of the preconditioner $\bfP_{\text{PAS}}^{-1}$ is then $\calO( \sum_{ \vv{t} \in \bbG(\eta)} n n_{\vv{t}}^2 + n_c^2) = \calO(n |\bbG(\eta)| (\eta / d)^{2d}) = \calO(n n_{sg} (\eta / d)^{2d})$ where $n_c=|\calU(\eta_c,d)|$ is the size of the sparse grid according to the former setting.

The complexity of the matrix-vector multiplication $\Sigma \bfx_0$ in \Cref{alg:pcg} is $\calO(n n_{sg})$ if $\bfSigma=\bfSigma_{\calU}$. In theory, the maximum number of iterations of the PCG with error factor $\epsilon$ is bounded by $\lceil \frac{1}{2}\sqrt{\kappa} \ln \big( \frac{2}{\epsilon}\big) \rceil$ \citep{shewchuk1994introduction}, where $\kappa$ is the condition number of the preconditioned matrix $\bfP_{\text{TAS}}^{-1} \bfSigma_{\calU}$, so the time complexity of \Cref{alg:pcg} for $\bfSigma=\bfSigma_{\calU}$ is $\calO(( \sqrt{\kappa} + (\eta/d)^{2d} )  n_{sg} n )$ where $\kappa = \kappa(\bfP_{\text{TAS}}^{-1} \bfSigma_{\calU})$. Therefore the time complexity of the entire posterior sampling scheme is $\calO \Big( \big(\eta + (\sqrt{\kappa} + (\eta/d)^{2d} + m) n \big) n_{sg} \Big)$, where $\nu$ is the smoothness parameter of the Mat\'ern kernel, $n_{sg}$ is the size of inducing points on a sparse grid $\calU(\eta,d)$, $n$ is size of observations $\bfX=\{\bfx_i\}_{i=1}^n$, $m$ is the size of test points $\bfX^*=\{\bfx_i^*\}_{i=1}^m$, $d$ is the dimension of the observations.

\section{Experiments}
\label{sec:exp}
In this section, we will demonstrate the computational efficiency and accuracy of the proposed sampling algorithm. We first generate samples from GP priors and posteriors for dimension $d=2$ and $d=4$ in \Cref{subsec:simu}. Then we apply our method to the same real-world problems as those addressed in \citep{wilson2020efficiently}, described in \Cref{subsec:appl}. For prior samplings, we use the \textit{random Fourier features} (RFF) with $2^6=64$ features \citep{rahimi2007random} and the Cholesky decomposition method as benchmarks. For posterior samplings, we compare our approach with the decoupled algorithm \citep{wilson2020efficiently} using RFF priors and the exact Matheron's update, as well as the Cholesky decomposition method. We use Mat\'ern kernels as defined in \cref{eq:Matern-1d} with the variance $\sigma^2=1$, lengthscale $\omega = \sqrt{2\nu}$ and smoothness $\nu=3/2$. Our experiments employ ``separable'' Mat\'ern kernels as specified in \cref{eq:separable-kernel}, with the same base kernels and the same lengthscale parameters $\omega=\sqrt{2\nu}$ in each dimension. For the proposed method, the inducing points are selected as a sparse grid $\calU(\eta,d)$ as defined in \cref{eq:sgd} for $(\eta=5,d=2)$, and $(\eta=6,d=4)$ respectively. We set the same nested sequence $\varnothing = \calU_{j,0} \subseteq \ldots \subseteq \calU_{j,t} \subseteq \calU_{j,t+1}\subseteq \ldots$ for each dimension $j$, and set $\calU_{j,t_j}$ as hyperbolic cross points defined in \cref{eq:clenshaw curtis points} for all $j$. We set the noise variates $\epsilon \sim \calN(\bfzero,\sigma_{\epsilon}^2 \bfI)$ with $\sigma_{\epsilon}^2=10^{-4}$ for all experiments. The seed value is set to $99$, and each experiment is replicated $1000$ times.

\subsection{Simulation}
\label{subsec:simu}
\paragraph{Prior sampling}
We generate prior samples $\bfZ=\{z_i\}_{i=1}^{n_s}$ over the domain $[0,1]^d$ using Mat\'ern 3/2 kernels, with $n_s=2^6,\ldots,2^{13}$ respectively. Left plots in \Cref{fig:prior d=2} and \Cref{fig:prior d=4} show the time required by different algorithms to sample from the different number of points $n_s$. We can observe that the proposed algorithm (InSG) is comparable in efficiency to the RFF method. To evaluate accuracy, we compute the 2-Wasserstein distances between empirical and true distributions. The 2-Wasserstein distance measures the similarity of two distributions. Let $f_1\sim\calN(\mu_1,\Sigma_1)$ and $f_2\sim\calN(\mu_2,\Sigma_2)$, the 2-Wasserstein distance between the Gaussian distributions $f_1$, $f_2$ on $L^2$ norm is given by \citep{dowson1982frechet, gelbrich1990formula,mallasto2017learning}
\begin{equation}\label{eq:2-Wasserstein}
    \begin{aligned}
     W_{2}(f_1,f_2) := & \Big( ||\mu_1 - \mu_2||^2 + {\rm tr}\big(\Sigma_1 + \Sigma_2 - 2(\Sigma_1^{\frac{1}{2}}\Sigma_2\Sigma_1^{\frac{1}{2}})^{\frac{1}{2}}\big) \Big)^{\frac{1}{2}}.
    \end{aligned}
\end{equation}
For the empirical distributions, the parameters are estimated from the replicated samples. The right plots in \Cref{fig:prior d=2} and \Cref{fig:prior d=4} demonstrate that the proposed algorithm performs with nearly the same precision as Cholesky decomposition
and RFF.
\begin{figure}[htb]
\centering
\begin{subfigure}[b]{0.48\textwidth}
    \includegraphics[width=\linewidth]{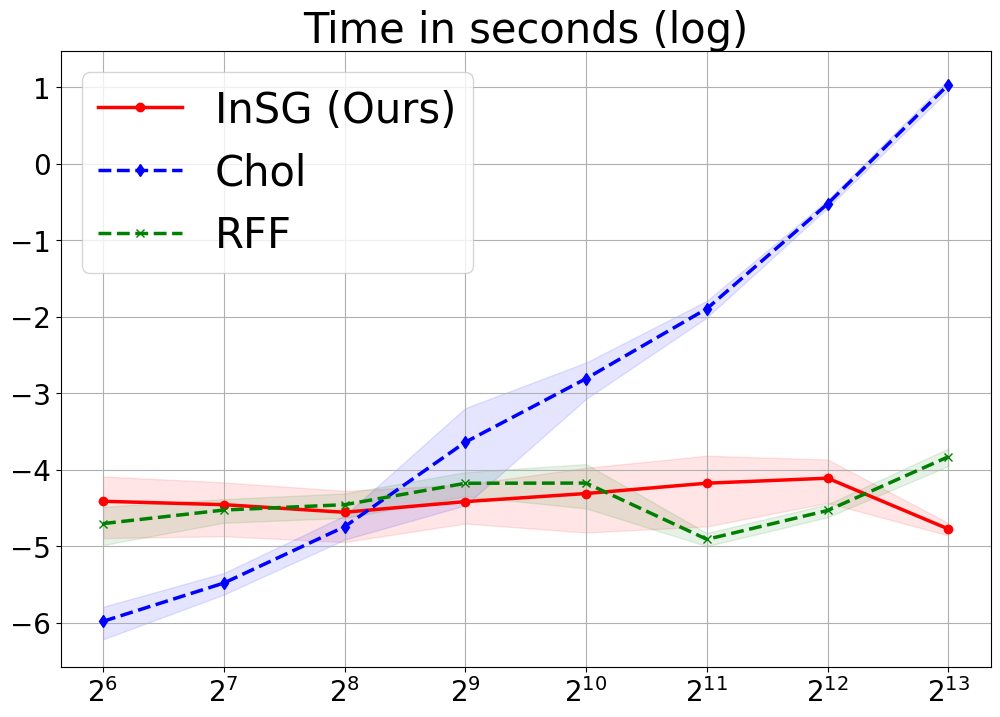}
\end{subfigure} 
\begin{subfigure}[b]{0.48\textwidth}
    \includegraphics[width=\linewidth]{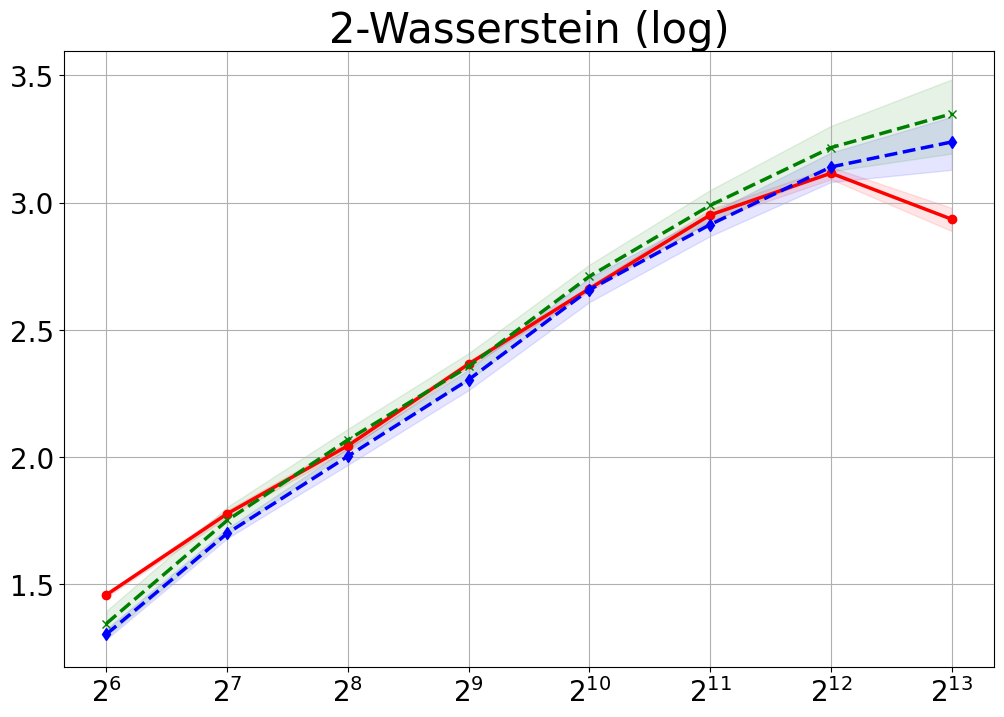}
\end{subfigure}
\caption{Time and accuracy of different algorithms for sampling from GP priors with Mat\'ern 3/2 in dimension $d=2$. $x$-axis is the number of the sampling points $n_s$. The sparse grid level of the proposed method is set as $\eta=5$. We denote the proposed approach, the inducing points on the sparse grid (InSG), by red dots, random Fourier features (RFF) by green triangles, Cholesky decomposition (Chol) by blue diamonds. \textit{Left:} Logarithm of time taken to generate a draw from GP priors. \textit{Right:} Logarithm of 2-Wasserstein distances between priors and true distributions.}
\label{fig:prior d=2}
\end{figure}

\begin{figure}[htb]
\centering
\begin{subfigure}[b]{0.48\textwidth}
    \includegraphics[width=\linewidth]{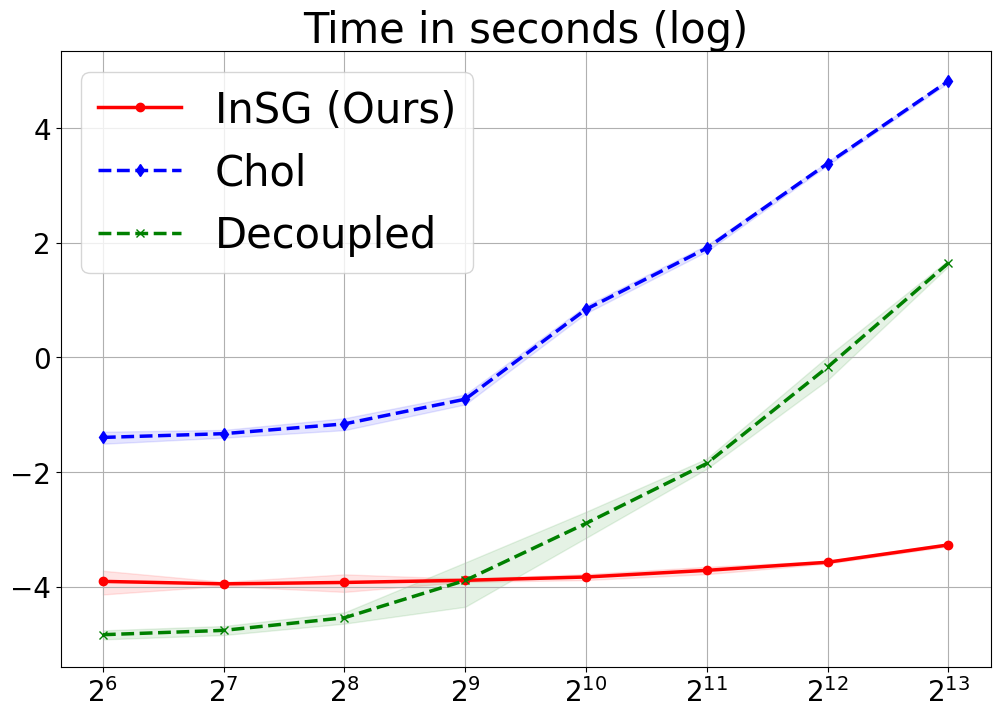}
\end{subfigure} 
\begin{subfigure}[b]{0.48\textwidth}
    \includegraphics[width=\linewidth]{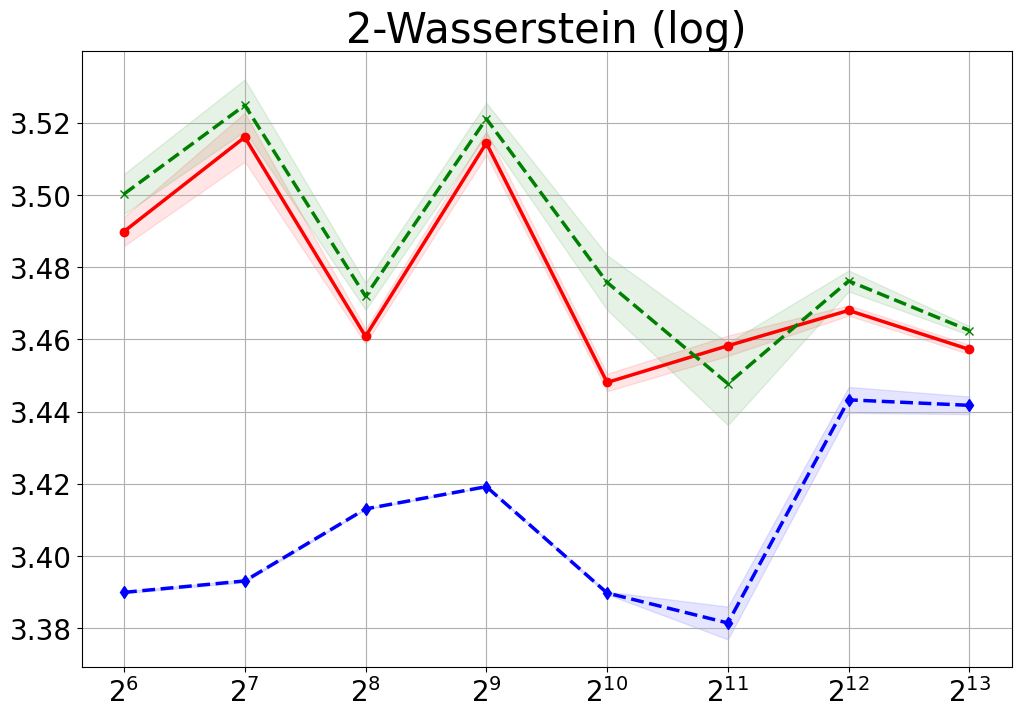}
\end{subfigure}
\caption{Time and accuracy of different algorithms for sampling from GP posteriors with Mat\'ern 3/2 and dimension $d=2$. $x$-axis is the number of observations $n$. The sparse grid level of the proposed method is set as $\eta=5$. The decoupled method is configured to use RFF priors with $2^8=64$ features and exact Matheron's update. \textit{Left:} Logarithm of time taken to generate a draw from GP posteriors over $m=1000$ points. \textit{Right:} Logarithm of 2-Wasserstein distances between posterior samplings and ground truth GP posteriors over $m=1000$ points.}
\label{fig:posterior d=2}
\end{figure}

\paragraph{Posterior sampling}
We use the Griewank function \citep{griewank1981generalized} as the test function, defined as:
\begin{equation*}
    f(\bfx)=\sum_{j=1}^d \frac{x_j^2}{4000} + \prod_{j=1}^d\cos(\frac{x_j}{\sqrt{j}}) + 1,\quad \bfx\in[-5,5]^d.
\end{equation*}
 The sparse grid $\calU(\eta,d)$ is set over $[-5,5]^d$ with configurations $(\eta=5,d=2)$ and $(\eta=6,d=4)$ for the design of our experiment. We then evaluate the average computational time and 2-Wasserstein distance over $m=1000$ random test points for each sampling method. \Cref{fig:posterior d=2} and \Cref{fig:posterior d=4} illustrate the performance of different sampling strategies. It's observed that our proposed algorithm is significantly more time-efficient compared to the Cholesky and decoupled algorithms, particularly when $n$ is larger than $2^{10}$. Moreover, our proposed method, the inducing points approximation on the sparse grid (InSG) demonstrates comparable accuracy to the decoupled method.

\subsection{Application}
\label{subsec:appl}
\paragraph{Thompson sampling} \textit{Thompson Sampling} (TS) \citep{thompson1933likelihood} is a classical strategy for decision-making by selecting actions $x\in \calX$ that minimize a black-box function $f:\calX \rightarrow \bbR$. At each iteration $t$, TS determines $\bfx_{t+1}\in \arg\min_{\bfx\in\calX}(f\vert \bfy)(\bfx)$, where $\bfy$ represents the observation set at current iteration. Upon finding the minimizer, the corresponding $y_{t+1}$ is obtained by evaluating $f$ at $\bfx_{t+1}$, and then the pair $(\bfx_{t+1},y_{t+1})$ is added to the training set. In this experiment, we consider the Ackley function \citep{ackley2012connectionist}
\begin{equation}\label{eq:ackley}
    \begin{aligned}
    f(\bfx) = -a \exp \Big( -b \sqrt{\frac{1}{d} \sum_{j=1}^d x_{j}^2} \Big) + a + \exp(1)
     - \exp \Big( \frac{1}{d} \sum_{j=1}^d\cos(c x_j) \Big),
    \end{aligned}
\end{equation}
with $a=20$, $b=0.2$, $c=2\pi$.
The goal of this experiment is to find the global minimizer of the target function $f$. We start with 3 initial samples, then at each iteration of TS, we draw a posterior sample $f \vert \bfy$ on 1024 uniformly distributed points over the interval $[-5,5]^d$ conditioned on the current observation set. Next, we pick the smallest posterior sample for this iteration, add it to the training set, and repeat the above process. This iterative approach helps us progressively approximate the global minimum. To neutralize the impact of the lengthscales and seeds, we average the regret over 8 different lengthscales $[0.2, 0.4, 0.6, 0.8, 1.0, 1.2, 1.4, 1.6]$ and 5 different seeds $[9, 99, 999, 9999, 99999]$. In \Cref{fig:ts regret}, we compare the logarithm of regret at each iteration for different sampling algorithms, the proposed approach (InSG) demonstrates comparable performance to the decoupled method.

\begin{figure*}[hbt!]
\centering
\begin{subfigure}[b]{0.325\textwidth}
  \includegraphics[width=\textwidth]{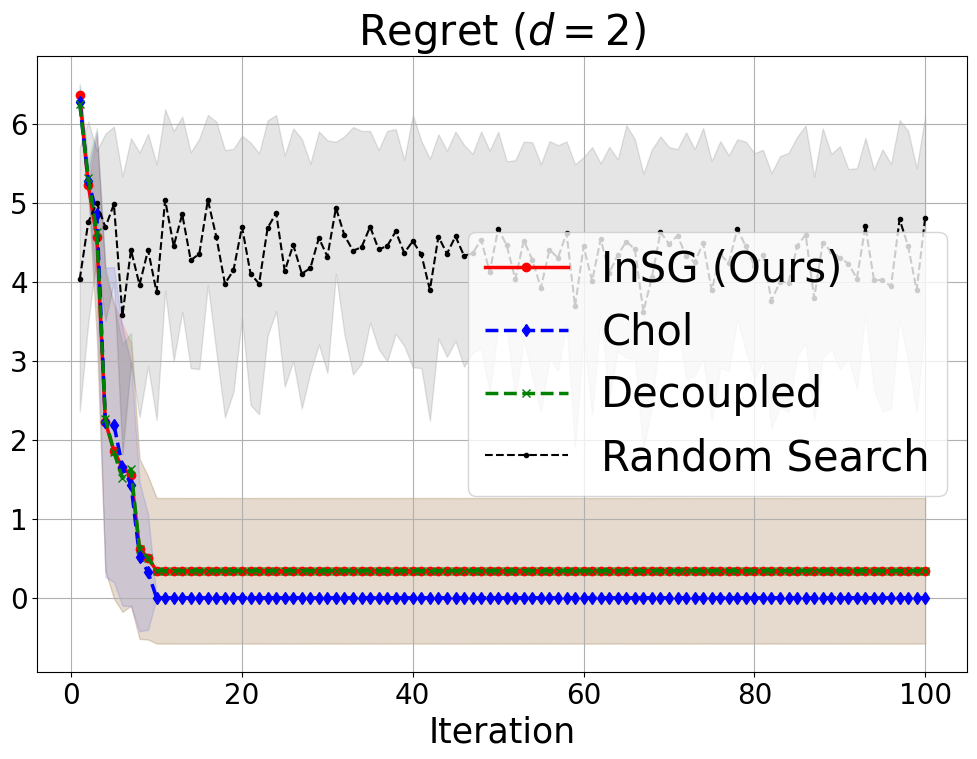}
\end{subfigure}
\begin{subfigure}[b]{0.325\textwidth}
  \includegraphics[width=\textwidth]{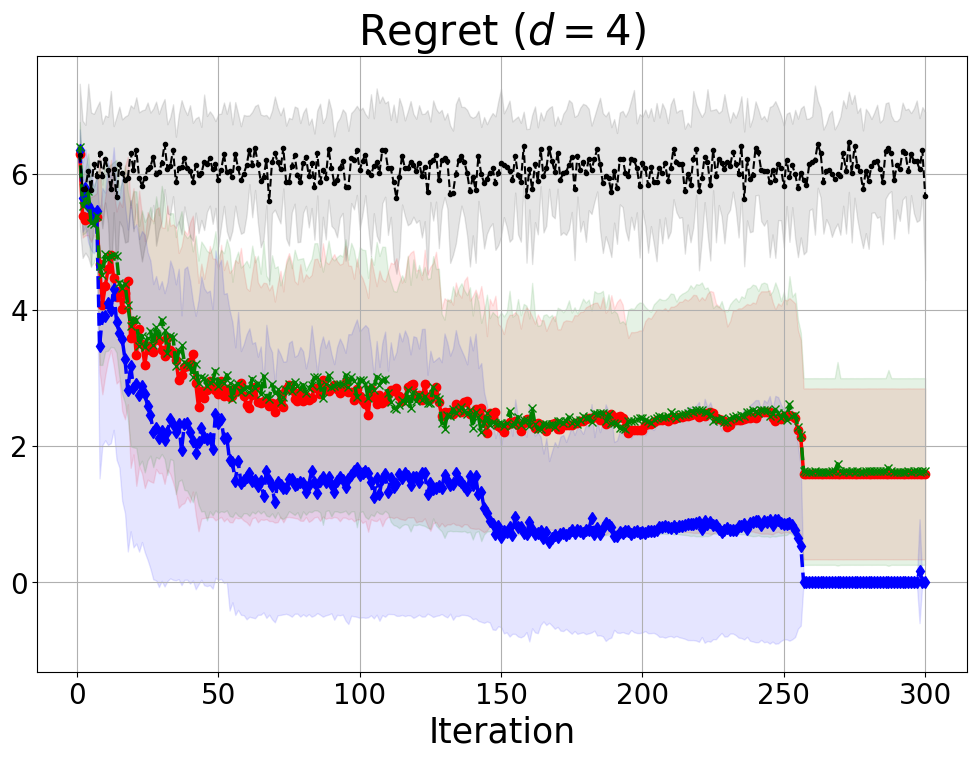}
\end{subfigure}
\begin{subfigure}[b]{0.325\textwidth}
  \includegraphics[width=\textwidth]{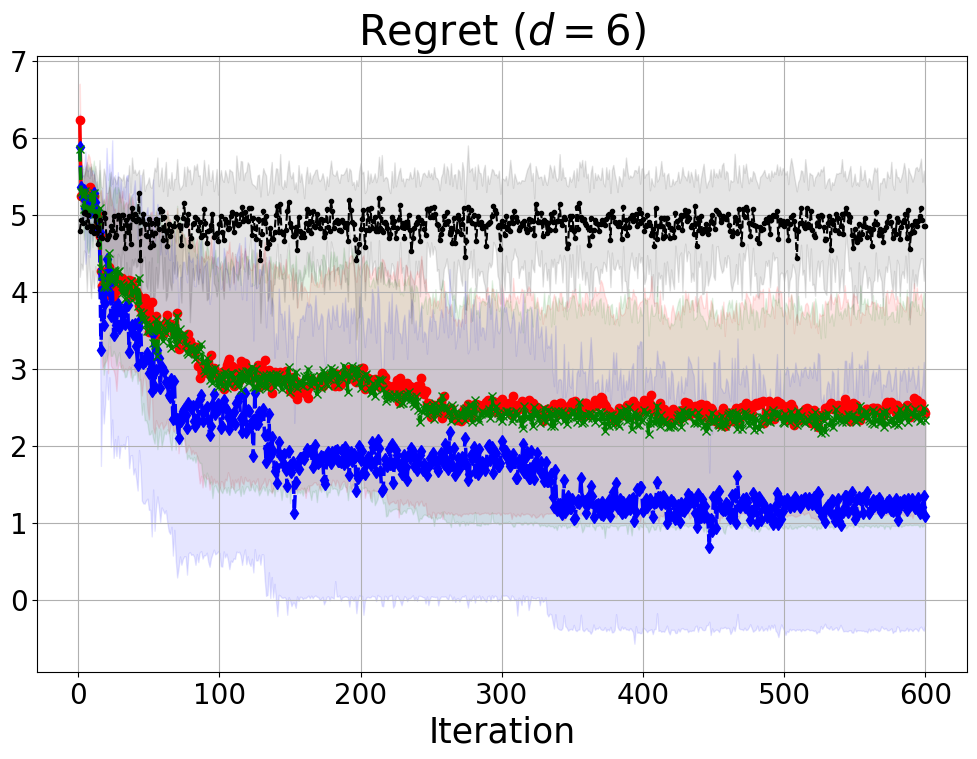}
\end{subfigure}
\caption{Regret of Thompson sampling methods of different algorithms for optimizing Ackley function with $d=2$ (\textit{left}), $d=4$ (\textit{middle}) and $d=6$ (\textit{right}) for Mat\'ern 3/2. The sparse grid design is set as $\calU(\eta=5,d=2)$, $\calU(\eta=6,d=4)$, and $\calU(\eta=8,d=6)$ respectively. The decoupled method is configured to use RFF priors with $2^8=64$ features and exact Matheron's update. $x$-axis is the iteration step, $y$-axis is the regret at each step.}
\label{fig:ts regret}
\end{figure*}

\paragraph{Simulating dynamical systems}
GP posteriors are also useful in dynamic systems where data is limited. An illustrative example is the FitzHugh–Nagumo model \citep{fitzhugh1961impulses,nagumo1962active}, which intricately models the activation and deactivation dynamics of a spiking neuron. The system's equations are given by:
\begin{equation}\label{eq:FNmodel}
    \dot{\bfx}= f(\bfx,a) = \begin{bmatrix}
        \dot{v}\\
        \dot{w}
    \end{bmatrix}=\begin{bmatrix}
        v -\frac{v^3}{3} - w + a\\
        \frac{1}{\gamma} (v-\beta w + \alpha)
    \end{bmatrix}.
\end{equation}
Using the Euler–Maruyama method \citep{maruyama1955}, the system's equations \cref{eq:FNmodel} can be discretized into a stochastic difference equation as follows:
\begin{equation}\label{eq:sde}
    \bfy_t = \bfx_{t+1} - \bfx_t = \tau f(\bfx_t,a_t) + \sqrt{\tau} \epsilon_t, \quad \epsilon_t \sim \calN(\bfzero,\sigma_{\epsilon}^2 \bfI),
 \end{equation}
where $\tau$ is the fixed step size. Write a FitzHugh–Naguomo neuron \cref{eq:FNmodel} in the form \cref{eq:sde}, we can obtain:
\begin{equation}\label{eq:FNmodel sde}
    \bfx_{t+1}-\bfx_{t} =\tau \begin{bmatrix}
        v_t - \frac{v_t^3}{3}-w_t + a_t\\
        \frac{1}{\gamma} (v_t - \beta w_t +\alpha)
    \end{bmatrix}
    + \sqrt{\tau} \epsilon_t.
\end{equation}
\begin{figure*}[htb]
    \centering
    \includegraphics[width=\linewidth]{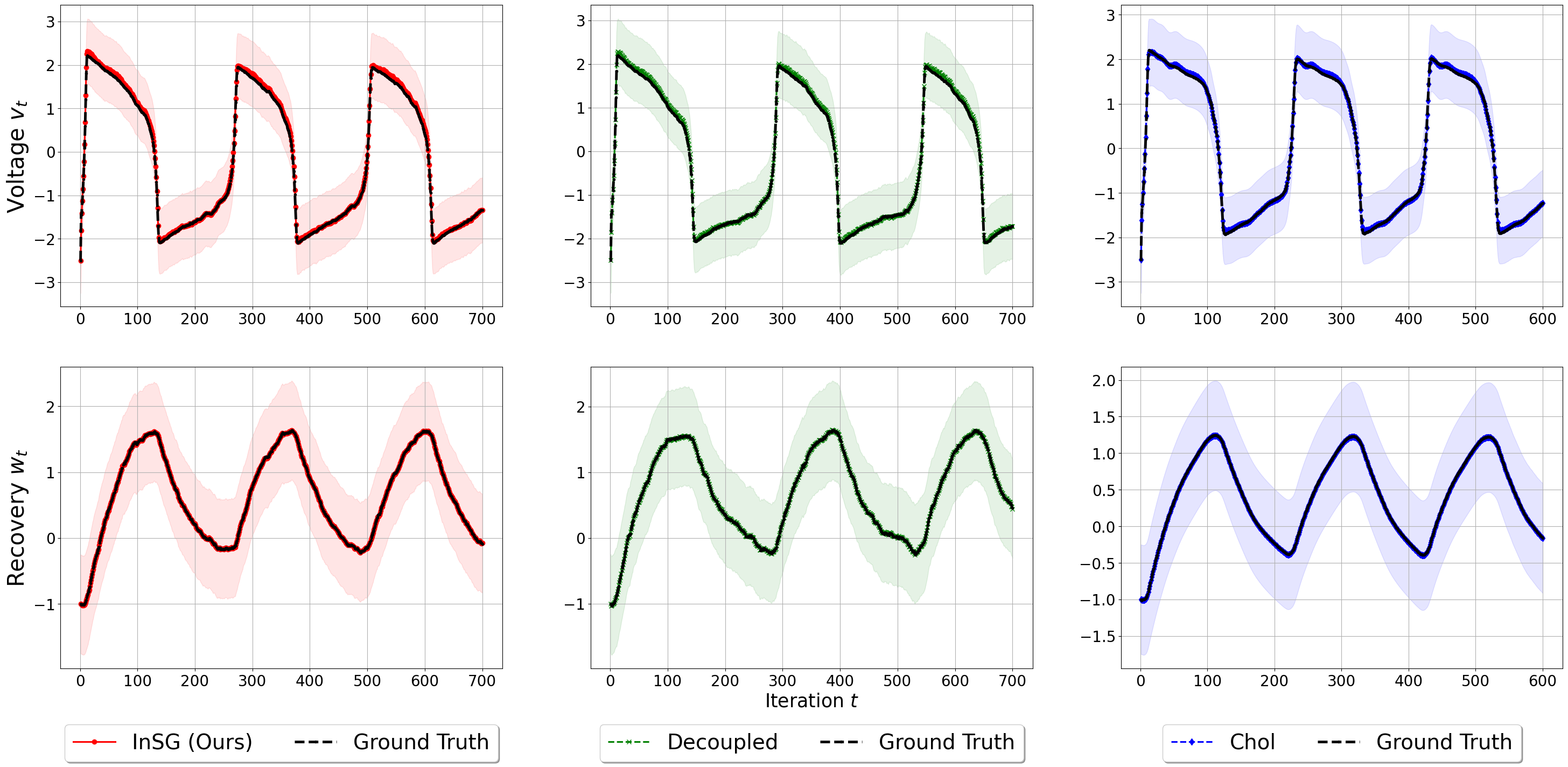}
    \caption{Trajectories of different algorithms for simulation of a stochastic FitzHugh–Nagumo neuron. Each iteration is generated over $1000$ samples. The sparse grid level of the proposed method (InSG) is set as $\eta=5$. The decoupled method is configured to use RFF priors with $2^8=64$ features and exact Matheron's update. \textit{Top:} Voltage $v_t$ trajectories generated by different algorithms. \textit{Bottom:} Recovery variable $w_t$ trajectories generated by different algorithms.}
    \label{fig:sde_trajectory}
\end{figure*}

\begin{figure}[htb]
\centering
\begin{subfigure}[b]{0.48\textwidth}
    \includegraphics[width=\linewidth]{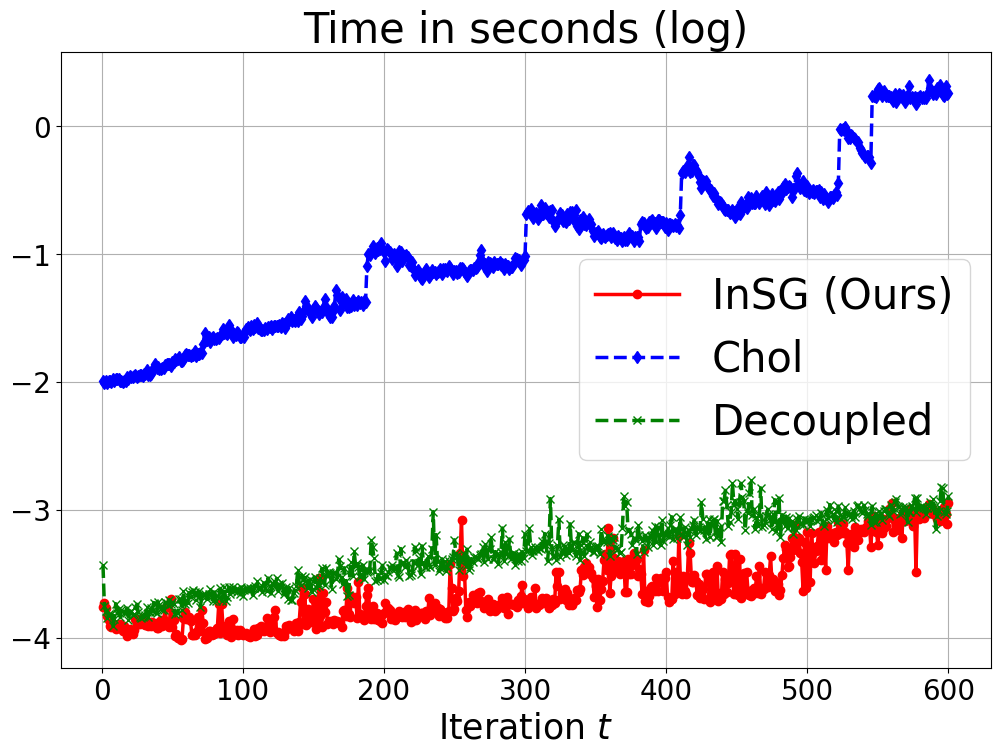}
\end{subfigure} 
\begin{subfigure}[b]{0.48\textwidth}
    \includegraphics[width=\linewidth]{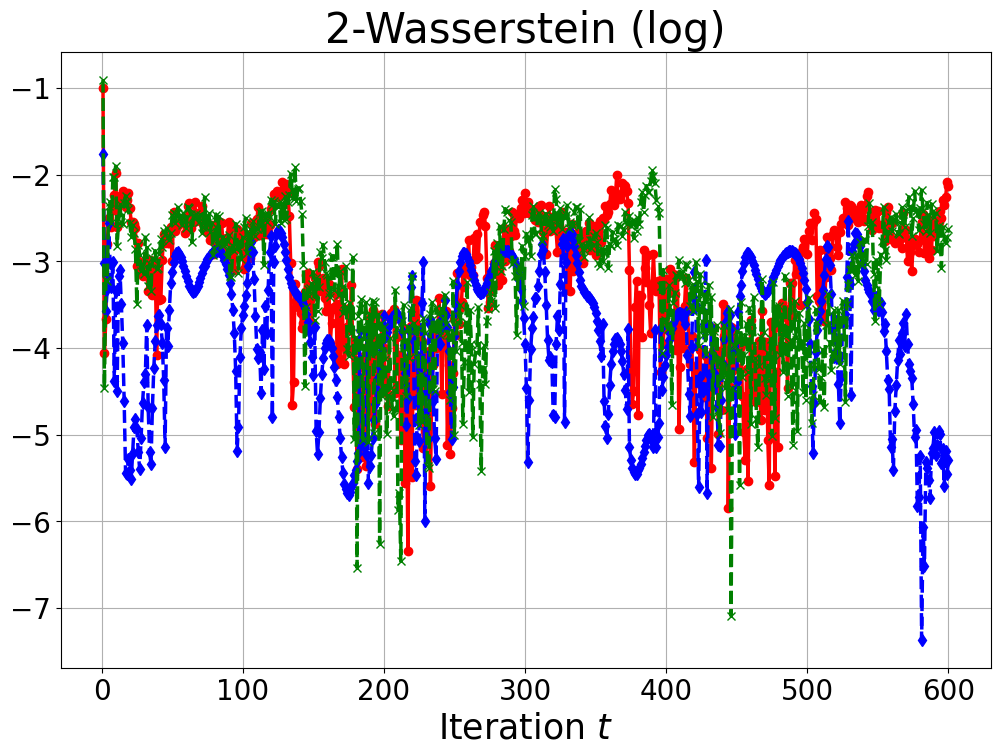}
\end{subfigure}
\caption{Time and accuracy of different algorithms for simulation of a stochastic FitzHugh–Nagumo neuron. \textit{Left:} Time cost at each iteration. \textit{Right:} Logarithm of 2-Wasserstein distance between simulations and ground truth state trajectories at each iteration.}
\label{fig:sde time_err}
\end{figure}
Our objective is to simulate the state trajectories of this dynamical system. Starting with an initial point $\bfx_0=(v_0,w_0)^T$, at iteration $t$, we draw a posterior GP sampling from the conditional distribution $p(\bfy_t\vert D_{t-1})$, where $D_{t-1}$ denotes the set of the training data $\{(\bfx_i,\bfy_i)\}_{i=1}^n$ along with the current trajectory $\{(\bfx_j,\bfy_j)\}_{j=1}^{t-1}$. In our implementation, we choose the model \cref{eq:FNmodel sde} with step size $\tau = 0.25$, parameters $\alpha=0.75$, $\beta=0.75$, $\gamma=20$, noise variance $\sigma_{\epsilon}^2 = 10^{-4}$, and initial point $\bfx_0=(-2.5,-1)^{\top}$. The training set $\{(\bfx_i,\bfy_i)\}_{i=1}^n$ was generated by evaluating \cref{eq:FNmodel sde} at $n=256$ points $\{\bfx_i\}_{i=1}^n$ $(\bfx_i \in \calX)$, which are uniformly distributed in the interval $\calX = [-2.5,2.5]\times[-1,2]$, with corresponding currents $\{a_i\}_{i=1}^{n}$ $(a_i \in \calA)$ chosen uniformly from $\calA=[0,1]$. Variations $\bfy_t$ in each step were simulated using independent Mat\'ern 3/2 GPs with $\omega=\sqrt{3}$ and inducing points approximation on the sparse grid $\calU(\eta=5,d=2)$. We generate 1000 samples at each iteration. \Cref{fig:sde_trajectory} presents the state trajectories of voltage $v_t$ and recovery variable $w_t$ versus the iteration step $t$ for each algorithm. The left plots in \Cref{fig:sde_trajectory} show that our proposed algorithm can accurately characterize the state trajectories of this dynamical system. \Cref{fig:sde time_err} illustrates the computational time required for each iteration and the 2-Wasserstein distance between the states $\bfx_t$ derived from GP-based simulations and those obtained using the Euler–Maruyama method as described in \cref{eq:sde} at each iteration $t$. These results demonstrate that our proposed algorithm not only reduces the computational time but also preserves accuracy that is comparable to the decoupled method.


\section{Conclusion}
\label{sec:conc}
In this work, we propose a scalable algorithm for GP sampling using inducing points on a sparse grid, which only requires a time complexity of $\calO \Big( (\eta+n_s) n_{sg} \Big)$ for $n_s$ sampling points, a sparse grid of size $n_{sg}$ and level $\eta$. Additionally, we show that the convergence rate for GP sampling with a sparse grid is $\calO( n_{sg}^{-\nu} (\log n_{sg})^{(\nu+2)(d-1)+d+1} )$ for a $d$-dimensional sparse grid of size $n_{sg}$ and a product Mat\'ern kernel of smoothness $\nu$. For the posterior samplings, we develop a two-level additive Schwarz preconditioner for the matrix over the sparse grid, which empirically shows rapid convergence. In the numerical study, we demonstrate that the proposed algorithm is not only efficient but also maintains accuracy comparable to other approximation methods.

\begin{ack}
This work is supported by NSF grants DMS-2312173 and CNS-2328395.
\end{ack}

\appendix
\label{appe}

\section{Theorems}\label{sec:theos}
\subsection{Rate of convergence}
\begin{theorem}\label{theo:induce approx order}
		Let $M=L_q([0,1])$ for any $q\in[1,+\infty)$. Let $Z$ be a GP on $[0,1]$ with a Mat\'ern kernel defined in \cref{eq:Matern-1d} with smoothness $\nu$, and $\hat{Z}$ be an inducing points approximation of $Z$ under the inducing points $\{0,1/n,2/n,\ldots,1\}$ for some positive integer $n$. Then for any $p\in[1,\infty)$, as $n\rightarrow+\infty$, the order of magnitude of the approximation error is
		\[W_p(Z,\hat{Z})=\calO(n^{-\nu}).\]
	\end{theorem}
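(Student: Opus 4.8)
The plan is to build an explicit coupling of $Z$ and $\hat{Z}$ on one probability space and then bound the induced $L_q$ transport cost by a single pointwise estimate on the Matérn kriging variance. Write $\bfu=\{0,1/n,\ldots,1\}$ for the inducing inputs and recall from \cref{eq:SoR approx} that $\hat{Z}$ is the mean-zero GP with covariance $K_{\mathrm{SoR}}(x,x')=K(x,\bfu)K(\bfu,\bfu)^{-1}K(\bfu,x')$. The central observation is that the kriging predictor $\tilde{Z}(x):=K(x,\bfu)K(\bfu,\bfu)^{-1}Z(\bfu)$, formed from the same realization of $Z$, is a mean-zero Gaussian process whose covariance is exactly $K_{\mathrm{SoR}}$; hence $\tilde{Z}\distribeq\hat{Z}$ and $(Z,\tilde{Z})$ is an admissible coupling. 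Since $W_p$ is an infimum over couplings, this gives
\[
W_p(Z,\hat{Z})^p \le \bbE\left[\,\|Z-\tilde{Z}\|_{L_q([0,1])}^p\,\right].
\]

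Next I would reduce the right-hand side to the power function. For each fixed $x$ the residual $Z(x)-\tilde{Z}(x)$ is mean-zero Gaussian with variance equal to the posterior variance
\[
P(x)^2 := K(x,x)-K(x,\bfu)K(\bfu,\bfu)^{-1}K(\bfu,x),
\]
the squared power function of the point set $\bfu$. By equivalence of Gaussian moments, $\bbE|Z(x)-\tilde{Z}(x)|^r=c_r\,P(x)^r$ for all $r\ge1$. Applying the generalized Minkowski inequality when $p\ge q$, and Jensen's inequality when $p<q$, together with $\int_0^1 P(x)^q\,dx\le\|P\|_{L^\infty([0,1])}^q$, one obtains $\bbE[\|Z-\tilde{Z}\|_{L_q}^p]\le C\,\|P\|_{L^\infty([0,1])}^p$ with $C$ depending only on $p,q$. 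Thus the whole estimate collapses to a uniform bound on $\|P\|_{L^\infty([0,1])}$.

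The uniform bound on the power function is the only genuinely analytic step, and I expect it to be the main obstacle. Here I would invoke the sampling-inequality machinery of RKHS / scattered-data approximation: the native space of the one-dimensional Matérn-$\nu$ kernel is norm-equivalent to the Sobolev space $H^{\nu+1/2}([0,1])$, and for a point set with fill distance $h=\sup_{x\in[0,1]}\min_i|x-u_i|$ the Wendland-type estimates give $\|P\|_{L^\infty}\le C\,h^{(\nu+1/2)-1/2}=C\,h^{\nu}$. Some care is required because $[0,1]$ is bounded: one must use that it is a Lipschitz domain satisfying an interior cone condition, so the restriction/extension characterization of the Matérn native space and the associated sampling inequality hold up to the boundary. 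For the equispaced grid the fill distance is $h=1/(2n)$, whence $\|P\|_{L^\infty}=\calO(n^{-\nu})$.

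Combining the three steps yields $W_p(Z,\hat{Z})\le (\bbE\|Z-\tilde{Z}\|_{L_q}^p)^{1/p}\le C^{1/p}\|P\|_{L^\infty}=\calO(n^{-\nu})$, as claimed. The coupling construction and the moment comparison are routine; the delicate points are citing the correct $H^{\nu+1/2}$ characterization of the Matérn native space and controlling the power-function constant uniformly in $n$ near the endpoints of $[0,1]$.
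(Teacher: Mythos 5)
Your proposal is correct, and its opening move coincides with the paper's entire strategy: the paper's proof also observes that $Z$ and $\hat{Z}$ can be realized on one probability space (implicitly, via the kriging predictor $K(\cdot,\bfu)K(\bfu,\bfu)^{-1}Z(\bfu)$, which you write out explicitly) and bounds $W_p(Z,\hat{Z})\leq \big(\bbE\|Z-\hat{Z}\|_{L_q([0,1])}^p\big)^{1/p}$, exactly your first display and the paper's \cref{eq:rate}. Where you genuinely diverge is the second half: the paper treats the moment bound as a black box, citing Theorem 4 of \citet{tuo2020kriging}, which gives $\|Z-\hat{Z}\|_{L_q([0,1])}=\calO(n^{-\nu})$ \emph{with sub-Gaussian tails} --- the tail control being what upgrades an order-of-magnitude statement to bounds on all $p$-th moments. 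You instead re-derive the needed moment bound from scratch: the pointwise residual is centered Gaussian with variance equal to the squared power function $P(x)^2$; Minkowski's integral inequality (for $p\geq q$) or Jensen (for $p<q$) collapses $\bbE\|Z-\tilde{Z}\|_{L_q}^p$ to $\|P\|_{L^\infty}^p$; and the scattered-data sampling inequality gives $\|P\|_{L^\infty}=\calO(h^{\nu})$ with fill distance $h=1/(2n)$, using the norm-equivalence of the Mat\'ern native space with $H^{\nu+1/2}$ --- the same fact the paper itself invokes later (Lemma 15 of \citet{tuo2020kriging}) in the proof of \Cref{theo:induce approx sg order}. Your route buys self-containedness, an explicit coupling (the paper only asserts the joint realization), and the observation that sub-Gaussian tails are not actually needed here: because the $L_q$ norm is an integral rather than a supremum, its moments are computable directly by Fubini-type arguments, with no concentration inequality. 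The paper's route buys brevity, and the cited theorem is strictly stronger (tail bounds rather than fixed moments), which would matter for almost-sure or $p$-uniform statements. The delicate points you flag --- the boundary-respecting $H^{\nu+1/2}$ characterization on $[0,1]$ and uniformity of the power-function constant for small $h$ --- are the right ones, and both hold on an interval (a Lipschitz domain satisfying an interior cone condition), so I see no gap in your argument.
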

	
	\begin{proof}
		Note that $Z$ and $\hat{Z}$ already live in the same probability space, that is $(Z,\hat{Z})\in \Gamma(Z,\hat{Z})$. Therefore,
  \begin{equation}\label{eq:rate}
      W_p(Z,\hat{Z})\leq \left(\mathbb{E}\|Z-\hat{Z}\|^p_{L_q([0,1])}\right)^{1/p}.
  \end{equation}
		According to Theorem 4 in \citep{tuo2020kriging}, $\|Z-\hat{Z}\|_{L_q([0,1])}$ has the order of magnitude $O(n^{-\nu})$ and sub-Gaussian tails, which leads to the desired result.
	\end{proof}

\begin{lemma}\label{lem:rkhs sparse grid rate}
    { \color{blue} \normalfont{[Adapted from Proposition 1 and Corollary 2 in \citep{rieger2017sampling}]} } Let $\Phi:\bbR^d \rightarrow \bbR$ be a reproducing kernel of $W_{2}^{r;\otimes^d}(\bbR^d)$, where $\Phi$ is the tensor product with $\Phi(\bfx):=\prod_{j=1}^d \phi(x_j)$ and $\bfx=(x_1,\ldots,x_d)^{\top}$, $\phi:\bbR\rightarrow\bbR$ is a reproducing kernel of Sobolev space $W_2^r (\bbR^1)$ of smoothness $r>1/2$ measured in $L_2(\bbR^1)$-norm. Suppose we are given the sparse grid $\calU(\eta,d)$ defined in \cref{eq:sgd} over $I^d=[0,1]^d$ with $n=\mathrm{\normalfont{card}}(\calU(\eta,d))$ points and an unknown function $f \in W_2^{r;\otimes^d}(I^d)$. Let $s_0$ be the solution to the norm-minimal interpolant on the sparse grid $\calU(\eta,d)$ as follows:
    \begin{equation}
        \min \Vert s \Vert_{W_{2}^{r;\otimes^d}} : s \in W_{2}^{r;\otimes^d}(\bbR^d) \quad \text{with} \quad s(\xi)=f(\xi),\quad \xi \in \calU(\eta,d).
    \end{equation}
    Then, we have 
    \begin{equation}
        \Vert f-s_0 \Vert_{L_{\infty}(I^d)} \leq C n^{-r+1/2} (\log n)^{(r+3/2)(d-1)+d+1}\Vert f \Vert_{W_2^{r;\otimes^d}(\bbR^d)},
    \end{equation}
    where $C>0$ is a constant.
\end{lemma}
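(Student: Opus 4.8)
The plan is to view $s_0$ as the interpolant that annihilates the sampling residual on the grid, and then to combine a sparse-grid sampling inequality with the norm-minimality of $s_0$. The whole argument is an adaptation of Proposition 1 and Corollary 2 in \citep{rieger2017sampling}, so the work is in assembling those ingredients and tracking exponents rather than in reproving them.

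First I would invoke the sparse-grid sampling inequality of Rieger and Wendland (Proposition 1 in \citep{rieger2017sampling}): for any $u \in W_2^{r;\otimes^d}(\bbR^d)$ it bounds the uniform norm $\|u\|_{L_\infty(I^d)}$ by a term of the form $C\, 2^{-\eta(r-1/2)}\, \eta^{\alpha}\, \|u\|_{W_2^{r;\otimes^d}}$ plus a discrete term controlled by the values $\{u(\xi):\xi\in\calU(\eta,d)\}$; here $\alpha$ collects the logarithmic corrections produced by the tensor-product (Smolyak) construction across the subgrids indexed by $\bbG(\eta)$. The exponent $r-1/2$ reflects the Sobolev embedding $W_2^r(\bbR^1)\hookrightarrow L_\infty$ afforded by $r>1/2$, while the power of $\eta$ accounts for the anisotropic distribution of the sparse-grid nodes.

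Second I would apply this inequality to the interpolation residual $u = f - s_0$ (identifying $f$ with a norm-controlled extension to $\bbR^d$, which is what the right-hand norm in the statement refers to). Because $s_0$ interpolates $f$ at every node of $\calU(\eta,d)$, we have $u(\xi)=0$ for all $\xi\in\calU(\eta,d)$, so the discrete sampling term vanishes identically, leaving $\|f-s_0\|_{L_\infty(I^d)} \le C\, 2^{-\eta(r-1/2)}\eta^\alpha\, \|f-s_0\|_{W_2^{r;\otimes^d}}$. To control the remaining Sobolev norm I would use the defining norm-minimality of $s_0$: since $f$ itself is an admissible interpolant, $\|s_0\|_{W_2^{r;\otimes^d}}\le \|f\|_{W_2^{r;\otimes^d}}$, whence by the triangle inequality $\|f-s_0\|_{W_2^{r;\otimes^d}}\le 2\|f\|_{W_2^{r;\otimes^d}}$. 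Third, I would convert the level-$\eta$ estimate into one in $n=\mathrm{card}(\calU(\eta,d))$: by the counting bound recalled in the paper (Lemma 3 of \citep{rieger2017sampling}), $n$ is comparable to $2^{\eta}\eta^{d-1}$, so that $2^{-\eta}\lesssim n^{-1}(\log n)^{d-1}$ and $\eta \lesssim \log n$; substituting these and tracking how each factor maps onto powers of $n$ and $\log n$ yields the stated rate $C\, n^{-r+1/2}(\log n)^{(r+3/2)(d-1)+d+1}\|f\|_{W_2^{r;\otimes^d}(\bbR^d)}$.

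The main obstacle will be the exact bookkeeping of the logarithmic exponent. The power $(r+3/2)(d-1)+d+1$ is not free: it must emerge precisely from combining the $\eta^{\alpha}$ factor in the sampling inequality with the $(\log n)^{d-1}$ produced when trading $2^{-\eta}$ for $n^{-1}$, and with the extra logarithmic loss incurred in the $L_\infty$ embedding step of the sparse-grid estimate. Matching it exactly requires the sharp form of the sampling inequality and the sharp cardinality estimate; by contrast, the embedding, the vanishing of the discrete term, and the norm-minimality step are all routine.
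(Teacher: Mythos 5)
Your proposal is correct and takes essentially the same route as the paper, whose entire proof of this lemma is the citation ``see proofs of Proposition 1 and Corollary 2 in \citep{rieger2017sampling}'': your reconstruction---the sparse-grid sampling inequality applied to the residual $f-s_0$, whose discrete term vanishes by the interpolation conditions, combined with norm-minimality of $s_0$ and the cardinality relation $n \asymp 2^{\eta}\eta^{d-1}$ to trade the level $\eta$ for $n$ and collect the logarithmic exponent---is precisely the argument of that cited reference. No gap to report.
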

\begin{proof}
    See proofs of Proposition 1 and Corollary 2 in \citep{rieger2017sampling}.
\end{proof}

\begin{theorem}\label{theo:induce approx sg order}
    Let $M=L_q([0,1]^d)$ for any $q\in[1,+\infty)$, $d>1$ is the dimension of the space $M$. Let $Z$ be a GP on $[0,1]^d$ with a product Mat\'ern kernel $K(\bfx,\bfx')= \prod_{j=1}^d K_0 (x^{(j)},x'^{(j)})$ defined in \cref{eq:separable-kernel} with variance $\sigma^2=1$ and base kernel $K_0(\cdot,\cdot)$ of smoothness $\nu$, and let $\hat{Z}$ be an inducing points approximation of $Z$ under the inducing points on a sparse grid $\calU(\eta,d)$ defined in \cref{eq:sgd}. We denote the cardinality of the sparse grid $\calU(\eta,d)$ by $n$.  Then, as $n\rightarrow+\infty$, the order of magnitude of the approximation error is
    \[W_2(Z,\hat{Z}) = \calO( n^{-\nu} (\log n)^{(\nu+2)(d-1)+d+1} ) .\]
\end{theorem}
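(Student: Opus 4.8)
The plan is to reduce the multidimensional Wasserstein bound to a deterministic RKHS interpolation estimate, mirroring the strategy of \Cref{theo:induce approx order} but replacing the one-dimensional kriging error bound with the sparse-grid estimate of \Cref{lem:rkhs sparse grid rate}. As in the one-dimensional case, $Z$ and $\hat{Z}$ live on the same probability space, so $(Z,\hat{Z})\in\Gamma(Z,\hat{Z})$ is an admissible coupling and
\[
W_2(Z,\hat{Z}) \leq \left(\mathbb{E}\|Z-\hat{Z}\|_{L_q([0,1]^d)}^2\right)^{1/2}.
\]
It therefore suffices to control the $L_q$-norm of the error process $D(x):=Z(x)-\hat{Z}(x)$.

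\textbf{The key bridge.} The first step is to identify $\hat Z$ as the kriging interpolant of the sample path. Since the inducing inputs equal the sparse grid $\calU=\calU(\eta,d)$, the SoR prior approximation reads $\hat{Z}(x)=K(x,\calU)K(\calU,\calU)^{-1}Z(\calU)$, which is precisely the posterior mean of $Z$ conditioned on its noiseless values on $\calU$. Consequently, for each fixed $x$ the error $D(x)$ is centered Gaussian with variance equal to the \emph{power function}
\[
\mathbb{E}|D(x)|^2 = P_{\calU}^2(x) = K(x,x)-K(x,\calU)K(\calU,\calU)^{-1}K(\calU,x),
\]
and the classical kriging--RKHS duality gives the variational identity $P_{\calU}(x)=\sup_{\|f\|_{\mathcal{H}}\leq 1}|f(x)-s_0[f](x)|$, where $\mathcal{H}$ is the RKHS of $K$ and $s_0[f]$ is the norm-minimal interpolant of $f$ on $\calU$. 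Taking the supremum over $x$ and over the unit ball interchangeably yields
\[
\sup_{x\in[0,1]^d}P_{\calU}(x) = \sup_{\|f\|_{\mathcal{H}}\leq 1}\|f-s_0[f]\|_{L_\infty(I^d)},
\]
so the uniform size of the power function is exactly the worst-case interpolation error controlled by \Cref{lem:rkhs sparse grid rate}.

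\textbf{Applying the lemma.} I next invoke \Cref{lem:rkhs sparse grid rate} with the smoothness correspondence $r=\nu+1/2$: the one-dimensional Mat\'ern kernel $K_0$ of smoothness $\nu$ is (norm-equivalently) the reproducing kernel of the Sobolev space $W_2^{\nu+1/2}(\mathbb{R})$, so its $d$-fold tensor product $K=\prod_{j=1}^d K_0$ is the reproducing kernel of $W_2^{\nu+1/2;\otimes^{d}}(\mathbb{R}^{d})$, matching the hypotheses of the lemma. Substituting $r=\nu+1/2$ into its conclusion produces the two exponents $-r+1/2=-\nu$ and $(r+3/2)(d-1)+d+1=(\nu+2)(d-1)+d+1$, hence
\[
\sup_{x\in[0,1]^d}P_{\calU}(x)\leq C\,n^{-\nu}(\log n)^{(\nu+2)(d-1)+d+1}.
\]

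\textbf{Assembling the $L_q$ bound.} Finally, I convert the uniform pointwise variance bound into an $L_q$ bound. Since $[0,1]^d$ carries a probability measure, for $q\in[1,2]$ monotonicity of $L_q$-norms and Fubini give $\mathbb{E}\|D\|_{L_q}^2\leq\mathbb{E}\|D\|_{L_2}^2=\int P_{\calU}^2\leq\sup_x P_{\calU}^2$; for $q\in[2,\infty)$, Lyapunov's inequality gives $(\mathbb{E}\|D\|_{L_q}^2)^{1/2}\leq(\mathbb{E}\|D\|_{L_q}^q)^{1/q}$, and another application of Fubini together with the Gaussian moment $\mathbb{E}|D(x)|^q=c_q P_{\calU}^q(x)$ yields $(\mathbb{E}\|D\|_{L_q}^q)^{1/q}\leq c_q^{1/q}\sup_x P_{\calU}(x)$. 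In either case $W_2(Z,\hat{Z})\leq C_q\sup_x P_{\calU}(x)$, and the displayed rate follows. \textbf{The main obstacle} I anticipate is making the kriging--RKHS power-function identity and, especially, the smoothness correspondence fully rigorous: one must verify that the product Mat\'ern kernel generates a space norm-equivalent to the tensor-product Sobolev space $W_2^{\nu+1/2;\otimes^{d}}$ (so that \Cref{lem:rkhs sparse grid rate} applies verbatim), and check that sample paths of $Z$ need not lie in $\mathcal{H}$ for the variance identity $\mathbb{E}|D(x)|^2=P_\calU^2(x)$ to hold---the latter being a statement purely about the Gaussian conditional distribution rather than about RKHS membership.
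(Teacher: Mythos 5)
Your proposal is correct and takes essentially the same route as the paper's proof: bound $W_2(Z,\hat Z)$ by the second moment of the natural coupling, identify the pointwise error variance with the squared power function and use the kriging--RKHS worst-case duality, then apply \Cref{lem:rkhs sparse grid rate} with the Sobolev correspondence $r=\nu+1/2$ (via Lemma 15 of \citet{tuo2020kriging} and Proposition 2 of \citet{rieger2017sampling}) to obtain the exponents $-\nu$ and $(\nu+2)(d-1)+d+1$. If anything, your explicit treatment of the $L_q$ conversion (the $q\in[1,2]$ and $q\in[2,\infty)$ cases via Fubini, Lyapunov's inequality, and Gaussian moments) and your remark that the conditional-variance identity needs no RKHS membership of sample paths make rigorous the step the paper compresses into \cref{eq:quasi-power}.
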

	
\begin{proof}
    If $\phi$ in \Cref{lem:rkhs sparse grid rate} is a Mat\'ern kernel in one dimension, we can infer that $r=\nu+1/2$
    by Lemma 15 in \citep{tuo2020kriging}, which states that the reproducing kernel Hilbert space (RKHS) $\calN_K(\bbR^1)$ is a Sobolev space $W_2^{\nu+1/2}(\bbR^1)$ with equivalent norms. We denote RKHS associated with product Mat\'ern kernel $K(\cdot,\cdot)$ by $\calN_K^{\otimes^d}(\bbR^d)$ and its norm by $\Vert \cdot \Vert_K$.
    
    Let $\hatf$ be the interpolation of the function $f\in\calN_{K}^{\otimes^d}(\bbR^d)$ with respect to the approximation $\hat{Z}$. Clearly, $\calN_{K}^{\otimes^d}(\bbR^d) = W_2^{\nu+1/2;\otimes^d}(\bbR^d)$. Proposition 2 in \citep{rieger2017sampling} asserts that $\hatf=s_0$ where $s_0$ is norm-minimal interpolant defined in \Cref{lem:rkhs sparse grid rate}. Therefore, for any $f \in \calN_{K}^{\otimes^d}(\bbR^d)$, we have
    \begin{equation}\label{eq:L inf sparse grid}
        \Vert f - \hatf \Vert_{L_{\infty}([0,1]^d)} \leq C n^{-\nu} (\log n)^{(\nu+2)(d-1)+d+1}\Vert f \Vert_{W_2^{\nu+1/2;\otimes^d}(\bbR^d)}.
    \end{equation}

    Additionally, Lemma 15 in \citep{tuo2020kriging} implies that
    \begin{align}\label{eq:quasi-power}
        \bbE\Vert Z-\hat{Z} \Vert_{L_q[0,1]^d}^2
        &= \sup_{\Vert f \Vert_K \leq 1} \vert f - \hatf \vert^2.
    \end{align}
    Note that $Z$ and $\hat{Z}$ already live in the same probability space, that is $(Z,\hat{Z})\in \Gamma(Z,\hat{Z})$. Therefore, by combining with \cref{eq:L inf sparse grid} and \cref{eq:quasi-power}, we can obtain the following results:
    \begin{align}\label{eq:sparse grid rate}
      W_2(Z,\hat{Z})
      &\leq \left(\mathbb{E}\|Z-\hat{Z}\|^2_{L_q[0,1]^d}\right)^{1/2}\nonumber\\
      & = \sup_{\Vert f \Vert_K \leq 1} \vert f - \hatf \vert \nonumber\\
      & \leq \sup_{\Vert f \Vert_K \leq 1} C n^{-\nu} (\log n)^{(\nu+2)(d-1)+d+1} \Vert f \Vert_{K}\nonumber\\
      & \leq C n^{-\nu} (\log n)^{(\nu+2)(d-1)+d+1}.
    \end{align}
    Thus, the order of the magnitude of the inducing points approximation on the sparse grid error is $W_2(Z,\hat{Z})=\calO(n^{-\nu} (\log n)^{(\nu+2)(d-1)+d+1})$.
\end{proof}

\subsection{Smolyak algorithm for GP sampling}
\begin{theorem}
\label{theo:sampling with sg inducing}
    Let $K(\cdot,\cdot)$ be a kernel function on $\bbR^d \times \bbR^d$, and $\calU=\calU(\eta,d)$ a sparse grid design with level $\eta$ and dimension $d$ defined in \cref{eq:sgd}. $\calU(\eta,d)=\bigcup_{\vv{t}\in\bbG(\eta)} \calU_{1,t_1} \times \calU_{2,t_2} \times \cdots \times \calU_{d,t_d}$
    where $\vv{t}=(t_1,\ldots,t_d)$, $\bbG(\eta) = \{ \vv{t}\in\bbN^d \vert \sum_{j=1}^d t_j = \eta \}$, $\eta\geq d\geq 2$, $\eta,d\in \bbN^+$. Let $f(\cdot) \sim \calGP(\bfzero, K(\cdot,\cdot))$ be a GP, then sampling from $f(\cdot)$ on a sparse grid $\calU$ can be done via the following formula:
    \begin{equation}
        f(\calU) \leftarrow \sum_{\vv{t}\in\bbP(\eta,d)} (-1)^{\eta - |\vv{t}|} \binom{d-1}{\eta - |\vv{t}|} \bigotimes_{j=1}^d \mathrm{chol}\Big( K(\calU_{j,t_j}, \calU_{j,t_j}) \Big) \bff_{\bfI_{n_{\vv{t}}}}  \delta_{x}(\calU_{\vv{t}}),
    \end{equation}
    \begin{equation}\label{eq:dirac delta}
        \delta_{x}(\calU_{\vv{t}}) := \begin{cases}
            0 \quad & \text{if $x \in \calU \setminus \calU_{\vv{t}}$}  \\
            1 \quad & \text{if $x \in \calU_{\vv{t}}$}
        \end{cases}, \qquad \calU_{\vv{t}}:=\calU_{1,t_1} \times \calU_{2,t_2} \times \cdots \times \calU_{d,t_d},
    \end{equation}
    where $\bbP(\eta,d) = \{ \vv{t} \in \bbN^d \vert \max(d,\eta-d+1) \leq \vert \vv{t} \vert \leq \eta \}$, $\mathrm{chol}\Big( K(\calU_{j,t_j}, \calU_{j,t_j}) \Big)$ is the Cholesky decomposition of $K(\calU_{j,t_j}, \calU_{j,t_j})$, $ \bff_{\bfI_{n_{\vv{t}}}} \sim \calN(\bfzero,\bfI_{n_{\vv{t}}})$ is standard multivariate normal distributed, $n_{\vv{t}}$ $=$ $\prod_{j=1}^d t_j$ is the size of the full grid design $\calU_{\vv{t}}=\calU_{1,t_1} \times \calU_{2,t_2} \times \cdots \times \calU_{d,t_d}$ associated with $\vv{t}=(t_1,\ldots,t_d)$, $\delta_{x}(\calU_{\vv{t}})$ is a Dirac measure on the set $\calU$.
\end{theorem}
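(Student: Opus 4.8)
The plan is to read the displayed formula not as the injection of fresh randomness but as a computational re-expression of the single Gaussian vector $f(\calU)$, obtained by applying Smolyak's combination technique \cref{eq:smolyak alg} to the GP sample path and exploiting the Kronecker structure of the product kernel on each full sub-grid. Throughout I would write $c_{\vv{t}} := (-1)^{\eta-|\vv{t}|}\binom{d-1}{\eta-|\vv{t}|}$ for the combination coefficients and $L_{\vv{t}} := \bigotimes_{j=1}^d \mathrm{chol}\big(K(\calU_{j,t_j},\calU_{j,t_j})\big)$ for the Kronecker--Cholesky factor attached to $\vv{t}$.

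First I would establish that $L_{\vv{t}}$ correctly samples $f$ on the full tensor grid $\calU_{\vv{t}}$. Since the kernel is separable as in \cref{eq:separable-kernel}, the covariance matrix over $\calU_{\vv{t}}$ factorizes as the Kronecker product $K(\calU_{\vv{t}},\calU_{\vv{t}}) = \bigotimes_{j=1}^d K(\calU_{j,t_j},\calU_{j,t_j})$, exactly as in \cref{eq:kernel full grid}. Using the two elementary Kronecker identities $(A\otimes B)(C\otimes D)=AC\otimes BD$ and $\mathrm{chol}(A\otimes B)=\mathrm{chol}(A)\otimes\mathrm{chol}(B)$, one checks that $L_{\vv{t}} L_{\vv{t}}^{\top}=K(\calU_{\vv{t}},\calU_{\vv{t}})$, so $L_{\vv{t}}\bff_{\bfI_{n_{\vv{t}}}}\distribeq f(\calU_{\vv{t}})$. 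This is the step where the product-kernel assumption does all the work, collapsing a $d$-dimensional Cholesky factorization into $d$ one-dimensional ones.

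Next I would invoke the exactness of Smolyak interpolation at the grid nodes. The operator $\calP(\eta,d)$ in \cref{eq:smolyak alg} reproduces any function at every point of $\calU(\eta,d)$, so applying it to the random path $f$ and restricting to the grid gives the identity $f(\calU)=\big(\calP(\eta,d)f\big)\big|_{\calU}$. Expanding the right-hand side by \cref{eq:smolyak alg}, each tensor term evaluated at a node of $\calU_{\vv{t}}$ returns the corresponding entry of $f(\calU_{\vv{t}})$; the Dirac selector $\delta_x(\calU_{\vv{t}})$ of \cref{eq:dirac delta} encodes precisely the embedding that places those entries at their sparse-grid positions and zeros out points outside $\calU_{\vv{t}}$. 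Substituting the sampled representation $f(\calU_{\vv{t}})\distribeq L_{\vv{t}}\bff_{\bfI_{n_{\vv{t}}}}$ into this identity then yields the stated formula.

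The crux of the argument, and the step I expect to be the main obstacle, is ensuring that the per-sub-grid draws are mutually consistent, so that the combination reproduces the full joint law $\calN(\bfzero,K(\calU,\calU))$ rather than merely the correct marginal at each node; indeed, were the $\bff_{\bfI_{n_{\vv{t}}}}$ independent, the variance at a shared node would pick up $\sum c_{\vv{t}}^2$ instead of $\sum c_{\vv{t}}$ and fail. The key structural fact is the combination-coefficient identity $\sum_{\vv{t}\in\bbP(\eta,d):\,\xi\in\calU_{\vv{t}}} c_{\vv{t}} = 1$ for every $\xi\in\calU$, a telescoping property of the Smolyak weights, which guarantees that if the draws agree at shared nodes then the weighted combination returns exactly that common value. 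To force the draws to agree I would use the nestedness $\calU_{j,t}\subseteq\calU_{j,t+1}$ together with the fact that the Cholesky factor of a nested covariance matrix contains the Cholesky factor of its leading principal submatrix, which lets a single master white-noise vector drive all the $\bff_{\bfI_{n_{\vv{t}}}}$ coherently. Verifying that this coupling holds across sub-grids of different shapes, and hence that the resulting vector has covariance exactly $K(\calU,\calU)$, is the delicate part; everything else is bookkeeping with the Kronecker identities and the combinatorial weights.
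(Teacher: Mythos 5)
Your proposal is correct and follows the same skeleton as the paper's proof: apply Smolyak's combination formula \cref{eq:smolyak alg} to the sampled path, use separability to factor the covariance over each full sub-grid $\calU_{\vv{t}}$ as a Kronecker product as in \cref{eq:kernel full grid}, identify the Cholesky factor of a Kronecker product with the Kronecker product of the one-dimensional Cholesky factors, and let the selector $\delta_x(\calU_{\vv{t}})$ do the bookkeeping. The genuine difference is your third step. The paper's proof sets $g(\calU)=\mathrm{chol}\big(K(\calU,\calU)\big)\bff_{n_{sg}}$ and then, inside the combination sum, replaces $g(\times_{j=1}^d\calU_{j,t_j})$ by $\bigotimes_{j=1}^d \mathrm{chol}\big(K(\calU_{j,t_j},\calU_{j,t_j})\big)\bff_{n_{\vv{t}}}$ — a substitution that is only a term-by-term equality in distribution; the restrictions of $g$ to the various $\calU_{\vv{t}}$ are strongly dependent, and the paper never specifies the joint law of the vectors $\bff_{n_{\vv{t}}}$. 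You correctly observe that independent draws would produce the wrong law (at a node shared by several sub-grids the variance scales like $\sum_{\vv{t}}c_{\vv{t}}^2$ rather than $\sum_{\vv{t}}c_{\vv{t}}=1$), and your repair goes through: ordering each nested set $\calU_{j,t}\subseteq\calU_{j,t+1}$ coarse-points-first makes $\mathrm{chol}\big(K(\calU_{j,t},\calU_{j,t})\big)$ the leading principal block of $\mathrm{chol}\big(K(\calU_{j,t+1},\calU_{j,t+1})\big)$, and lower-triangularity means the rows indexed by coarse points are unchanged under extension; Kronecker products inherit this block structure, so if every $\bff_{\bfI_{n_{\vv{t}}}}$ is the restriction of one master white-noise vector on the full grid $\times_{j=1}^d\calU_{j,\eta-d+1}$, the sub-grid samples agree on each intersection $\calU_{\vv{t}}\cap\calU_{\vv{t}'}$ (a full grid at the componentwise minimum level) and in fact all coincide with the restriction of a single global draw with covariance $K(\calU,\calU)$, after which the coefficient identity collapses the combination to that draw. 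So yours is a strictly more careful rendering of the same argument: the paper's version buys brevity, while yours pins down the joint law and yields the correct implementation prescription (the white-noise vectors must be coupled across $\vv{t}$, not resampled independently). Two remarks for completeness: the separable form \cref{eq:separable-kernel} is essential to both arguments but is not stated as a hypothesis in \Cref{theo:sampling with sg inducing}, so it should be flagged; and the identity $\sum_{\vv{t}\in\bbP(\eta,d):\,\xi\in\calU_{\vv{t}}}c_{\vv{t}}=1$ for every $\xi\in\calU$ that you invoke is precisely the exactness of the combination technique at nodes of nested sparse grids, which the paper also uses implicitly and without proof.
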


\begin{proof}
    We know that sampling from $f(\calU) \sim \calN(\bfzero, K(\calU,\calU))$ can be done via Cholesky decomposition, i.e., by computing $\text{chol}\big(K(\calU,\calU) \big) \bff_{n_{sg}}$ where $\bff_{n_{sg}} \sim \calN(\bfzero, \bfI_{n_{sg}})$ is multivariate normal distributed on the sparse grid $\calU$ of size $n_{sg}=\vert \calU \vert$. Note that $\text{chol}\big(K(\calU,\calU) \big) \bff_{n_{sg}}$ can be regarded as a function $g$ evaluated on $\calU$ such that $g(\calU)=\text{chol}\big(K(\calU,\calU) \big) \bff_{n_{sg}}$. According to \cref{eq:smolyak alg}, we can obtain the following equation by applying Smolyak's algorithm $\calP(\eta,d)$ to $g$,
    \begin{equation}
    \begin{aligned}
        \calP(\eta,d) ( g ) &= \sum_{\vv{t}\in\bbP(\eta,d)} (-1)^{\eta - |\vv{t}|}  
    \binom{d-1}{\eta - |\vv{t}|} \Big( \bigotimes_{j=1}^d \calU_{j,t_j} \Big) (g)\\
    &= \sum_{\vv{t}\in\bbP(\eta,d)} (-1)^{\eta - |\vv{t}|}  
    \binom{d-1}{\eta - |\vv{t}|}
    g(\times_{j=1}^d \calU_{j,t_j})
    \delta_x (\calU_{\vv{t}})\\
    &= \sum_{\vv{t}\in\bbP(\eta,d)} (-1)^{\eta - |\vv{t}|}  
    \binom{d-1}{\eta - |\vv{t}|}
    \bigotimes_{j=1}^d
    \text{chol}\Big( K(\calU_{j,t_j}, \calU_{j,t_j}) \Big) \bff_{n_{\vv{t}}} \delta_x (\calU_{\vv{t}}),
    \end{aligned}
    \end{equation}
    where $\delta_x(\calU_{\vv{t}})$ and $\calU_{\vv{t}}$ are defined in \cref{eq:dirac delta}. Note that $g$ is defined on $\calU$, so the tensor product operation $\bigotimes_{j=1}^d \calU_{j,t_j}$ on $g$ is actually evaluating $g$ on the corresponding points, which is equivalent to applying Dirac delta measure defined on set $\times_{j=1}^d \calU_{j,t_j}$ to $g(\bigotimes_{j=1}^d \calU_{j,t_j})$.
\end{proof}

\section{Plots}
\label{sec:plots}
\begin{figure}[htb]
\centering
\includegraphics[width=\linewidth]{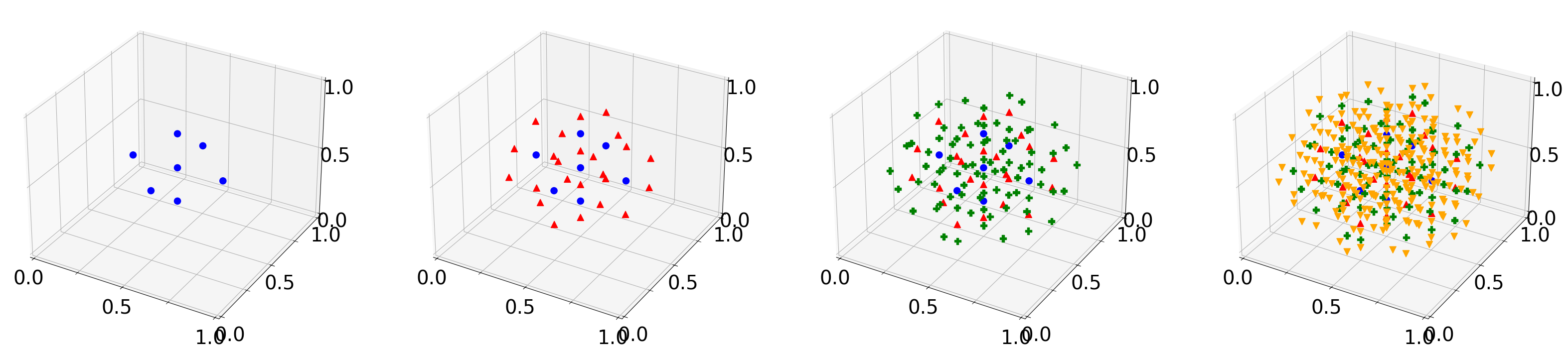}
\caption{Sparse grids $\calU(\eta,d) $over $[0,1]^d$ of level $\eta=4,5,6,7$ and dimension $d=3$.}
\label{fig:sgdesign d=3}
\end{figure}

\begin{figure}[htb]
\centering
\begin{subfigure}[b]{0.48\textwidth}
    \includegraphics[width=\linewidth]{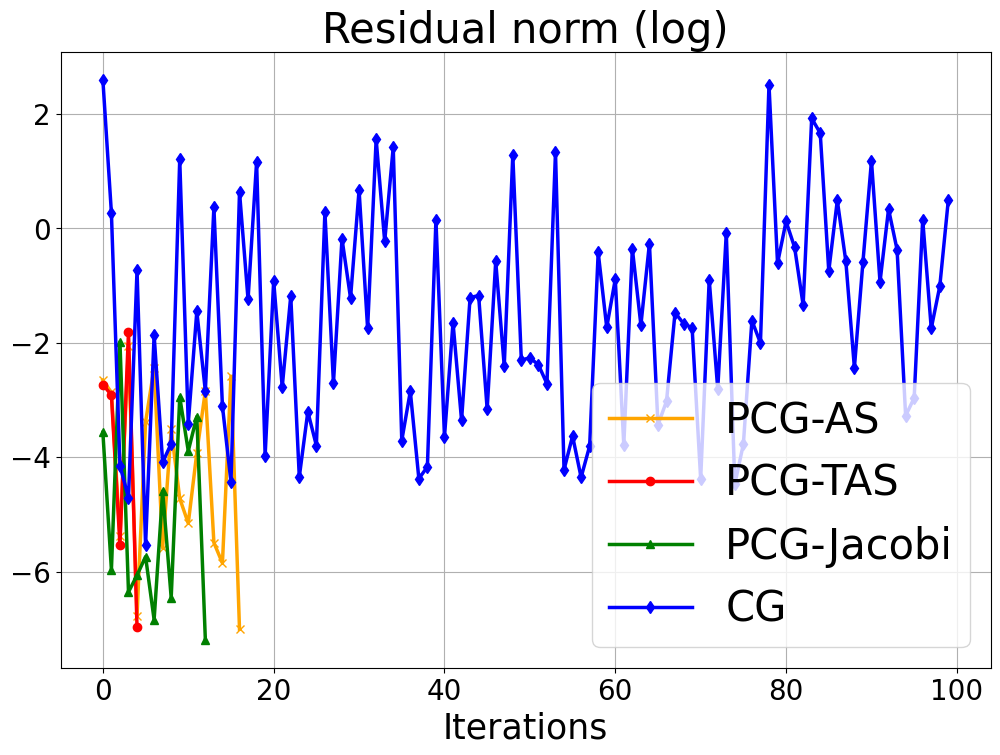}
\end{subfigure} 
\begin{subfigure}[b]{0.48\textwidth}
    \includegraphics[width=\linewidth]{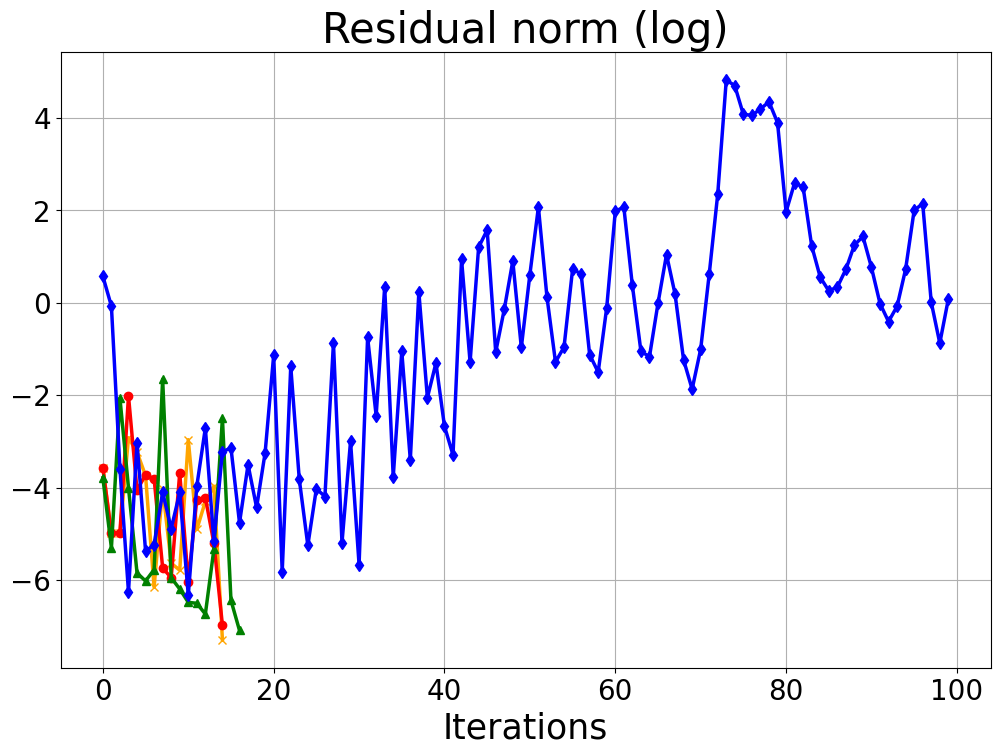}
\end{subfigure}
\caption{Residuals of PCG for $\Sigma_{\calU}^{-1} \bfv$ for tolerance $\tau=10^{-3}$ with different preconditioners where $\Sigma_{\calU}= \big[ \bfK_{\calU,\calU} + \sigma_{\epsilon}^{-2} \bfK_{\calU,\bfX} \bfK_{\bfX,\calU} \big]$, $\calU=\calU(\eta,d)$ is the sparse grid defined in \cref{eq:sgd}, vector $\bfv$ is randomly generated. \textit{Left:} The sparse grid $\calU(\eta=5,d=2)$. \textit{Right:} The sparse grid $\calU(\eta=6,d=4)$.}
\label{fig:precond blowup}
\end{figure}

\begin{figure}[htb]
\centering
\begin{subfigure}[b]{0.48\textwidth}
    \includegraphics[width=\linewidth]{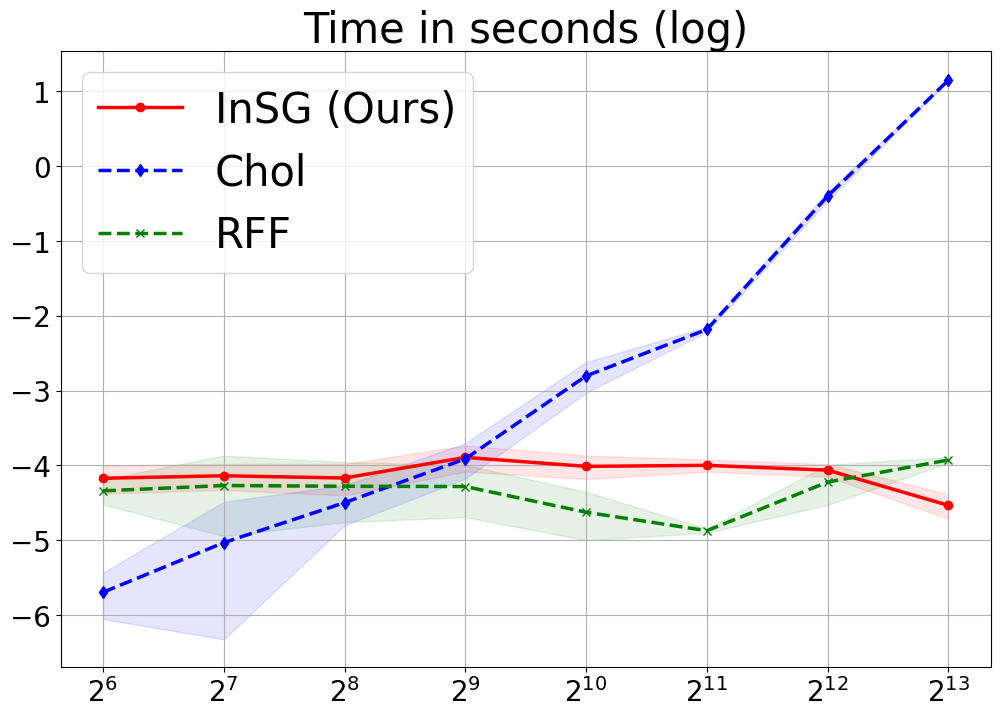}
\end{subfigure} 
\begin{subfigure}[b]{0.48\textwidth}
    \includegraphics[width=\linewidth]{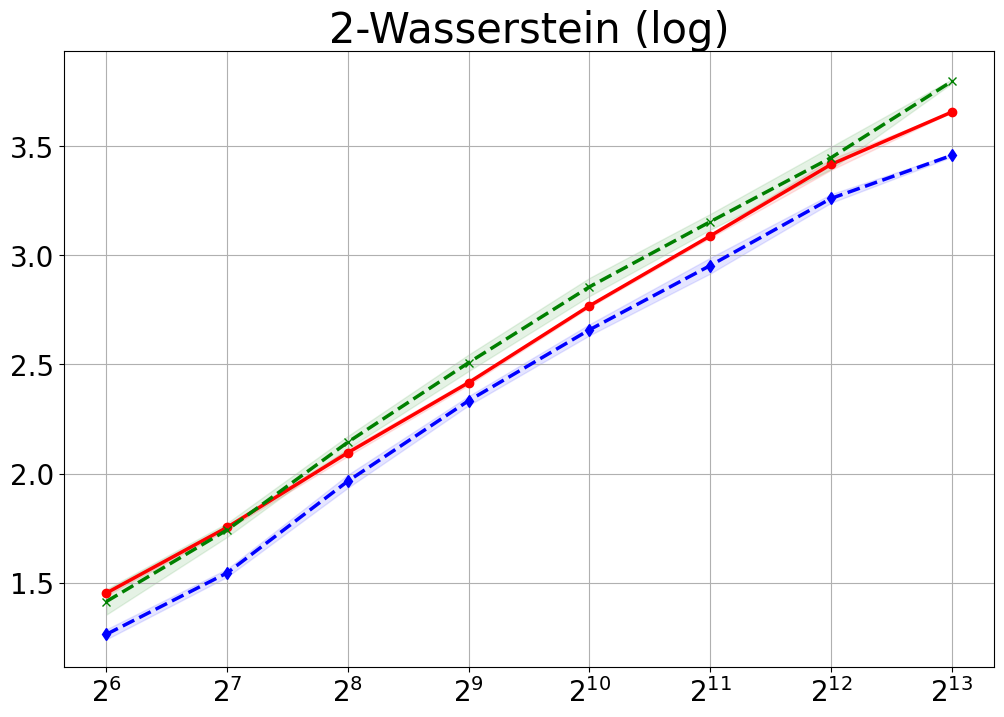}
\end{subfigure}
\caption{Time and accuracy of different algorithms for sampling from GP priors with Mat\'ern 3/2 in dimension $d=4$. $x$-axis is the number of grid points $n_s$. The sparse grid level of the proposed method is set as $\eta=6$. \textit{Left:} Logarithm of time taken to generate a draw from GP priors. \textit{Right:} Logarithm of 2-Wasserstein distances between priors and true distributions.}
\label{fig:prior d=4}
\end{figure}

\begin{figure}[htb]
\centering
\begin{subfigure}[b]{0.48\textwidth}
    \includegraphics[width=\linewidth]{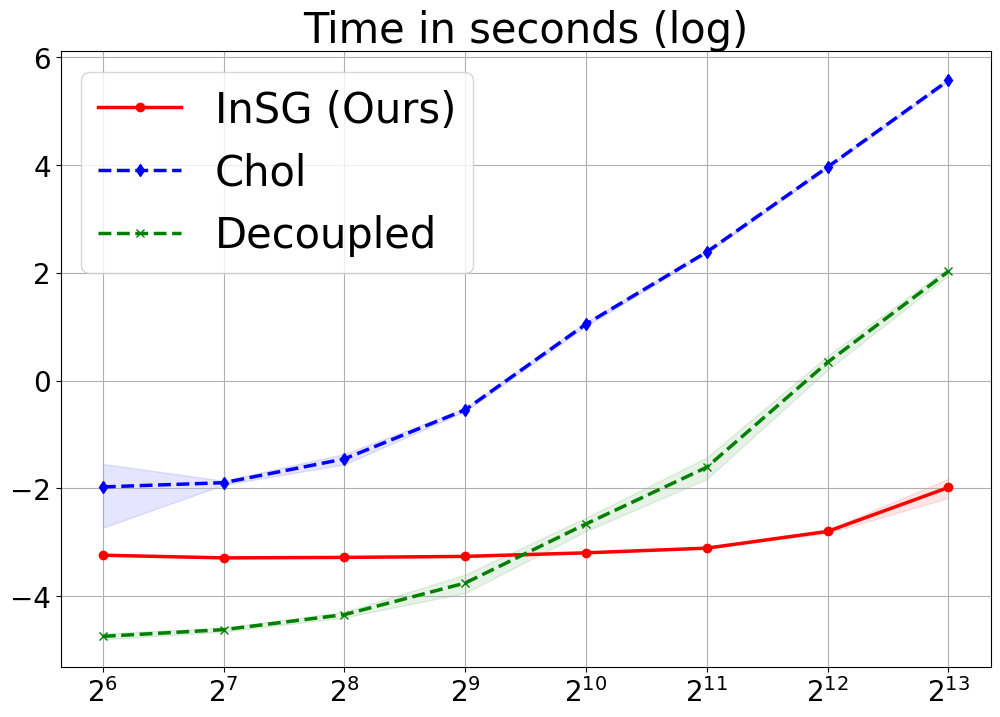}
\end{subfigure} 
\begin{subfigure}[b]{0.48\textwidth}
    \includegraphics[width=\linewidth]{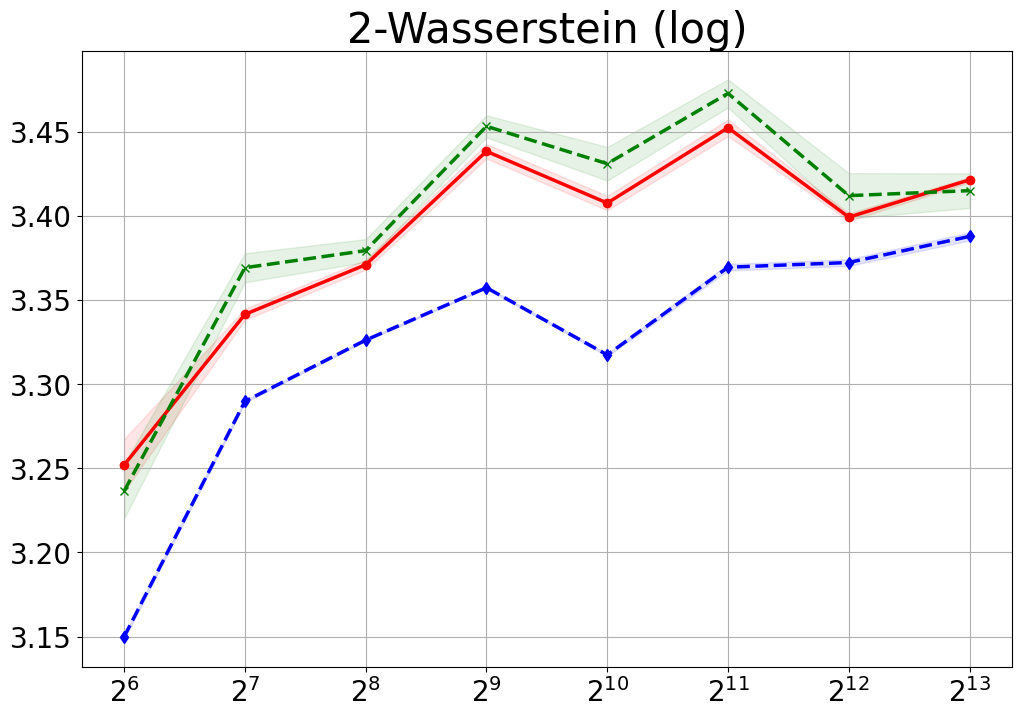}
\end{subfigure}
\caption{Time and accuracy of different algorithms for sampling from GP posteriors with Mat\'ern 3/2 and dimension $d=4$. $x$-axis is the number of observations $n$. The sparse grid level of the proposed method is set as $\eta=6$. The decoupled method is configured to use RFF priors with $2^8=64$ features and exact Matheron's update. \textit{Left:} Logarithm of time taken to generate a draw from GP posteriors over $m=1000$ points. \textit{Right:} Logarithm of 2-Wasserstein distances between posterior samplings and ground truth GP posteriors over $m=1000$ points.}
\label{fig:posterior d=4}
\end{figure}



\clearpage
\bibliographystyle{apalike}
\bibliography{references}

\begin{thebibliography}{}

\bibitem[Ackley, 2012]{ackley2012connectionist}
Ackley, D. (2012).
\newblock {\em {A connectionist machine for genetic hillclimbing}}, volume~28.
\newblock Springer Science \& Business Media.

\bibitem[Al~Daas and Grigori, 2019]{al2019class}
Al~Daas, H. and Grigori, L. (2019).
\newblock {A class of efficient locally constructed preconditioners based on coarse spaces}.
\newblock {\em SIAM Journal on Matrix Analysis and Applications}, 40(1):66--91.

\bibitem[Al~Daas et~al., 2023]{al2023efficient}
Al~Daas, H., Jolivet, P., and Rees, T. (2023).
\newblock {Efficient Algebraic Two-Level Schwarz Preconditioner for Sparse Matrices}.
\newblock {\em SIAM Journal on Scientific Computing}, 45(3):A1199--A1213.

\bibitem[Banerjee et~al., 2008]{banerjee2008gaussian}
Banerjee, S., Gelfand, A.~E., Finley, A.~O., and Sang, H. (2008).
\newblock {Gaussian predictive process models for large spatial data sets}.
\newblock {\em Journal of the Royal Statistical Society Series B: Statistical Methodology}, 70(4):825--848.

\bibitem[Bishop et~al., 1995]{bishop1995neural}
Bishop, C.~M. et~al. (1995).
\newblock {\em {Neural networks for pattern recognition}}.
\newblock Oxford university press.

\bibitem[Cai and Sarkis, 1999]{cai1999restricted}
Cai, X.-C. and Sarkis, M. (1999).
\newblock {A restricted additive Schwarz preconditioner for general sparse linear systems}.
\newblock {\em Siam journal on scientific computing}, 21(2):792--797.

\bibitem[Chen et~al., 2022]{chen2022kernel}
Chen, H., Ding, L., and Tuo, R. (2022).
\newblock {Kernel packet: An Exact and Scalable Algorithm for Gaussian Process Regression with Mat{\'e}rn Correlations}.
\newblock {\em Journal of Machine Learning Research}, 23(127):1--32.

\bibitem[Chen and Tuo, 2022]{chen2022scalable}
Chen, H. and Tuo, R. (2022).
\newblock {A Scalable and Exact Gaussian Process Sampler via Kernel Packets}.

\bibitem[Cole et~al., 2021]{cole2021locally}
Cole, D.~A., Christianson, R.~B., and Gramacy, R.~B. (2021).
\newblock {Locally induced Gaussian processes for large-scale simulation experiments}.
\newblock {\em Statistics and Computing}, 31:1--21.

\bibitem[Constantinou et~al., 2017]{constantinou2017testing}
Constantinou, P., Kokoszka, P., and Reimherr, M. (2017).
\newblock {Testing separability of space-time functional processes}.
\newblock {\em Biometrika}, 104(2):425--437.

\bibitem[Cressie, 2015]{cressie2015statistics}
Cressie, N. (2015).
\newblock {\em {Statistics for spatial data}}.
\newblock John Wiley \& Sons.

\bibitem[Curriero, 2006]{curriero2006use}
Curriero, F.~C. (2006).
\newblock {On the use of non-Euclidean distance measures in geostatistics}.
\newblock {\em Mathematical Geology}, 38:907--926.

\bibitem[Cutajar et~al., 2016]{cutajar2016preconditioning}
Cutajar, K., Osborne, M., Cunningham, J., and Filippone, M. (2016).
\newblock {Preconditioning kernel matrices}.
\newblock In {\em International conference on machine learning}, pages 2529--2538. PMLR.

\bibitem[Demmel, 1997]{demmel1997applied}
Demmel, J.~W. (1997).
\newblock {\em {Applied numerical linear algebra}}.
\newblock SIAM.

\bibitem[Diggle et~al., 2003]{diggle2003introduction}
Diggle, P.~J., Ribeiro, P.~J., and Christensen, O.~F. (2003).
\newblock {An introduction to model-based geostatistics}.
\newblock {\em Spatial statistics and computational methods}, pages 43--86.

\bibitem[Ding et~al., 2021]{ding2021sparse}
Ding, L., Tuo, R., and Shahrampour, S. (2021).
\newblock A sparse expansion for deep gaussian processes.
\newblock {\em arXiv preprint arXiv:2112.05888}.

\bibitem[Dolean et~al., 2015]{dolean2015introduction}
Dolean, V., Jolivet, P., and Nataf, F. (2015).
\newblock {\em {An introduction to domain decomposition methods: algorithms, theory, and parallel implementation}}.
\newblock SIAM.

\bibitem[Dowson and Landau, 1982]{dowson1982frechet}
Dowson, D. and Landau, B. (1982).
\newblock {The Fr{\'e}chet distance between multivariate normal distributions}.
\newblock {\em Journal of multivariate analysis}, 12(3):450--455.

\bibitem[Dryja and Widlund, 1994]{dryja1994domain}
Dryja, M. and Widlund, O.~B. (1994).
\newblock {Domain decomposition algorithms with small overlap}.
\newblock {\em SIAM Journal on Scientific Computing}, 15(3):604--620.

\bibitem[Engel et~al., 2005]{engel2005reinforcement}
Engel, Y., Mannor, S., and Meir, R. (2005).
\newblock {Reinforcement learning with Gaussian processes}.
\newblock In {\em Proceedings of the 22nd international conference on Machine learning}, pages 201--208.

\bibitem[FitzHugh, 1961]{fitzhugh1961impulses}
FitzHugh, R. (1961).
\newblock {Impulses and physiological states in theoretical models of nerve membrane}.
\newblock {\em Biophysical journal}, 1(6):445--466.

\bibitem[Frazier, 2018a]{frazier2018tutorial}
Frazier, P.~I. (2018a).
\newblock {A tutorial on Bayesian optimization}.
\newblock {\em arXiv preprint arXiv:1807.02811}.

\bibitem[Frazier, 2018b]{frazier2018bayesian}
Frazier, P.~I. (2018b).
\newblock {Bayesian optimization}.
\newblock In {\em Recent advances in optimization and modeling of contemporary problems}, pages 255--278. Informs.

\bibitem[Garcke, 2013]{garcke2013sparse}
Garcke, J. (2013).
\newblock {Sparse grids in a nutshell}.
\newblock In {\em Sparse grids and applications}, pages 57--80. Springer.

\bibitem[Gardner et~al., 2018]{gardner2018gpytorch}
Gardner, J., Pleiss, G., Weinberger, K.~Q., Bindel, D., and Wilson, A.~G. (2018).
\newblock {Gpytorch: Blackbox matrix-matrix gaussian process inference with gpu acceleration}.
\newblock {\em Advances in neural information processing systems}, 31.

\bibitem[Gelbrich, 1990]{gelbrich1990formula}
Gelbrich, M. (1990).
\newblock {On a formula for the L2 Wasserstein metric between measures on Euclidean and Hilbert spaces}.
\newblock {\em Mathematische Nachrichten}, 147(1):185--203.

\bibitem[Genton, 2001]{genton2001classes}
Genton, M.~G. (2001).
\newblock {Classes of kernels for machine learning: a statistics perspective}.
\newblock {\em Journal of machine learning research}, 2(Dec):299--312.

\bibitem[Genton, 2007]{genton2007separable}
Genton, M.~G. (2007).
\newblock Separable approximations of space-time covariance matrices.
\newblock {\em Environmetrics: The official journal of the International Environmetrics Society}, 18(7):681--695.

\bibitem[Girard et~al., 2002]{girard2002gaussian}
Girard, A., Rasmussen, C., Candela, J.~Q., and Murray-Smith, R. (2002).
\newblock {Gaussian process priors with uncertain inputs application to multiple-step ahead time series forecasting}.
\newblock {\em Advances in neural information processing systems}, 15.

\bibitem[Gneiting et~al., 2006]{gneiting2006geostatistical}
Gneiting, T., Genton, M.~G., and Guttorp, P. (2006).
\newblock {Geostatistical space-time models, stationarity, separability, and full symmetry}.
\newblock {\em Monographs On Statistics and Applied Probability}, 107:151.

\bibitem[Golub and Van~Loan, 2013]{golub2013matrix}
Golub, G.~H. and Van~Loan, C.~F. (2013).
\newblock {\em {Matrix Computations}}.
\newblock JHU press.

\bibitem[Gramacy and Apley, 2015]{gramacy2015local}
Gramacy, R.~B. and Apley, D.~W. (2015).
\newblock {Local Gaussian process approximation for large computer experiments}.
\newblock {\em Journal of Computational and Graphical Statistics}, 24(2):561--578.

\bibitem[Grande et~al., 2014]{grande2014sample}
Grande, R., Walsh, T., and How, J. (2014).
\newblock {Sample efficient reinforcement learning with Gaussian processes}.
\newblock In {\em International Conference on Machine Learning}, pages 1332--1340. PMLR.

\bibitem[Griewank, 1981]{griewank1981generalized}
Griewank, A.~O. (1981).
\newblock {Generalized descent for global optimization}.
\newblock {\em Journal of optimization theory and applications}, 34(1):11--39.

\bibitem[Grigorievskiy et~al., 2017]{grigorievskiy2017parallelizable}
Grigorievskiy, A., Lawrence, N., and S{\"a}rkk{\"a}, S. (2017).
\newblock {Parallelizable sparse inverse formulation Gaussian processes (SpInGP)}.
\newblock In {\em 2017 IEEE 27th International Workshop on Machine Learning for Signal Processing (MLSP)}, pages 1--6. IEEE.

\bibitem[Hartikainen and S{\"a}rkk{\"a}, 2010]{hartikainen2010kalman}
Hartikainen, J. and S{\"a}rkk{\"a}, S. (2010).
\newblock {Kalman filtering and smoothing solutions to temporal Gaussian process regression models}.
\newblock In {\em 2010 IEEE international workshop on machine learning for signal processing}, pages 379--384. IEEE.

\bibitem[Henderson et~al., 1983]{henderson1983history}
Henderson, H.~V., Pukelsheim, F., and Searle, S.~R. (1983).
\newblock {On the history of the Kronecker product}.
\newblock {\em Linear and Multilinear Algebra}, 14(2):113--120.

\bibitem[Hensman et~al., 2018]{hensman2018variational}
Hensman, J., Durrande, N., and Solin, A. (2018).
\newblock {Variational Fourier features for Gaussian processes}.
\newblock {\em Journal of Machine Learning Research}, 18(151):1--52.

\bibitem[Hensman et~al., 2013]{hensman2013gaussian}
Hensman, J., Fusi, N., and Lawrence, N.~D. (2013).
\newblock {Gaussian processes for big data}.
\newblock {\em arXiv preprint arXiv:1309.6835}.

\bibitem[Hensman et~al., 2015]{hensman2015scalable}
Hensman, J., Matthews, A., and Ghahramani, Z. (2015).
\newblock {Scalable variational Gaussian process classification}.
\newblock In {\em Artificial Intelligence and Statistics}, pages 351--360. PMLR.

\bibitem[Hestenes et~al., 1952]{hestenes1952methods}
Hestenes, M.~R., Stiefel, E., et~al. (1952).
\newblock {Methods of conjugate gradients for solving linear systems}.
\newblock {\em Journal of research of the National Bureau of Standards}, 49(6):409--436.

\bibitem[Journel and Huijbregts, 1976]{journel1976mining}
Journel, A.~G. and Huijbregts, C.~J. (1976).
\newblock {Mining geostatistics}.

\bibitem[Katzfuss and Guinness, 2021]{katzfuss2021general}
Katzfuss, M. and Guinness, J. (2021).
\newblock {A general framework for Vecchia approximations of Gaussian processes}.
\newblock {\em Statistical Science}, 36(1):124--141.

\bibitem[Katzfuss et~al., 2022]{katzfuss2022scaled}
Katzfuss, M., Guinness, J., and Lawrence, E. (2022).
\newblock {Scaled Vecchia approximation for fast computer-model emulation}.
\newblock {\em SIAM/ASA Journal on Uncertainty Quantification}, 10(2):537--554.

\bibitem[Kennedy and O'Hagan, 2001]{kennedy2001bayesian}
Kennedy, M.~C. and O'Hagan, A. (2001).
\newblock {Bayesian calibration of computer models}.
\newblock {\em Journal of the Royal Statistical Society: Series B (Statistical Methodology)}, 63(3):425--464.

\bibitem[Kolda, 2006]{kolda2006multilinear}
Kolda, T.~G. (2006).
\newblock {Multilinear operators for higher-order decompositions.}
\newblock Technical report, Sandia National Laboratories (SNL), Albuquerque, NM, and Livermore, CA~….

\bibitem[Kuss and Rasmussen, 2003]{kuss2003gaussian}
Kuss, M. and Rasmussen, C. (2003).
\newblock {Gaussian processes in reinforcement learning}.
\newblock {\em Advances in neural information processing systems}, 16.

\bibitem[Kuss et~al., 2005]{kuss2005assessing}
Kuss, M., Rasmussen, C.~E., and Herbrich, R. (2005).
\newblock {Assessing Approximate Inference for Binary Gaussian Process Classification.}
\newblock {\em Journal of machine learning research}, 6(10).

\bibitem[Lin et~al., 2023]{lin2023sampling}
Lin, J.~A., Antor{\'a}n, J., Padhy, S., Janz, D., Hern{\'a}ndez-Lobato, J.~M., and Terenin, A. (2023).
\newblock {Sampling from Gaussian Process Posteriors using Stochastic Gradient Descent}.
\newblock {\em arXiv preprint arXiv:2306.11589}.

\bibitem[MacKay et~al., 2003]{mackay2003information}
MacKay, D.~J., Mac~Kay, D.~J., et~al. (2003).
\newblock {\em {Information theory, inference and learning algorithms}}.
\newblock Cambridge university press.

\bibitem[Maddox et~al., 2021]{maddox2021bayesian}
Maddox, W.~J., Balandat, M., Wilson, A.~G., and Bakshy, E. (2021).
\newblock {Bayesian optimization with high-dimensional outputs}.
\newblock {\em Advances in Neural Information Processing Systems}, 34:19274--19287.

\bibitem[Mallasto and Feragen, 2017]{mallasto2017learning}
Mallasto, A. and Feragen, A. (2017).
\newblock {Learning from uncertain curves: The 2-Wasserstein metric for Gaussian processes}.
\newblock {\em Advances in Neural Information Processing Systems}, 30.

\bibitem[Maruyama, 1955]{maruyama1955}
Maruyama, G. (1955).
\newblock {Continuous Markov processes and stochastic equations}.
\newblock {\em Rend. Circ. Mat. Palermo}, 4:48--90.

\bibitem[Marzouk and Najm, 2009]{marzouk2009dimensionality}
Marzouk, Y.~M. and Najm, H.~N. (2009).
\newblock {Dimensionality reduction and polynomial chaos acceleration of Bayesian inference in inverse problems}.
\newblock {\em Journal of Computational Physics}, 228(6):1862--1902.

\bibitem[Murray-Smith and Pearlmutter, 2004]{murray2004transformations}
Murray-Smith, R. and Pearlmutter, B.~A. (2004).
\newblock {Transformations of Gaussian process priors}.
\newblock In {\em International Workshop on Deterministic and Statistical Methods in Machine Learning}, pages 110--123. Springer.

\bibitem[Nagumo et~al., 1962]{nagumo1962active}
Nagumo, J., Arimoto, S., and Yoshizawa, S. (1962).
\newblock {An active pulse transmission line simulating nerve axon}.
\newblock {\em Proceedings of the IRE}, 50(10):2061--2070.

\bibitem[Neal, 2012]{neal2012bayesian}
Neal, R.~M. (2012).
\newblock {\em {Bayesian learning for neural networks}}, volume 118.
\newblock Springer Science \& Business Media.

\bibitem[Nguyen et~al., 2021]{nguyen2021gaussian}
Nguyen, V., Deisenroth, M.~P., and Osborne, M.~A. (2021).
\newblock {Gaussian Process Sampling and Optimization with Approximate Upper and Lower Bounds}.
\newblock {\em arXiv preprint arXiv:2110.12087}.

\bibitem[Nickisch and Rasmussen, 2008]{nickisch2008approximations}
Nickisch, H. and Rasmussen, C.~E. (2008).
\newblock {Approximations for binary Gaussian process classification}.
\newblock {\em Journal of Machine Learning Research}, 9(Oct):2035--2078.

\bibitem[Nickisch et~al., 2018]{nickisch2018state}
Nickisch, H., Solin, A., and Grigorevskiy, A. (2018).
\newblock {State space Gaussian processes with non-Gaussian likelihood}.
\newblock In {\em International Conference on Machine Learning}, pages 3789--3798. PMLR.

\bibitem[Nychka et~al., 2015]{nychka2015multiresolution}
Nychka, D., Bandyopadhyay, S., Hammerling, D., Lindgren, F., and Sain, S. (2015).
\newblock {A multiresolution Gaussian process model for the analysis of large spatial datasets}.
\newblock {\em Journal of Computational and Graphical Statistics}, 24(2):579--599.

\bibitem[O'Hagan, 1978]{o1978curve}
O'Hagan, A. (1978).
\newblock {Curve fitting and optimal design for prediction}.
\newblock {\em Journal of the Royal Statistical Society: Series B (Methodological)}, 40(1):1--24.

\bibitem[Okhrin et~al., 2020]{okhrin2020new}
Okhrin, Y., Schmid, W., and Semeniuk, I. (2020).
\newblock {New approaches for monitoring image data}.
\newblock {\em IEEE Transactions on Image Processing}, 30:921--933.

\bibitem[Pardo-Iguzquiza and Chica-Olmo, 2008]{pardo2008geostatistics}
Pardo-Iguzquiza, E. and Chica-Olmo, M. (2008).
\newblock {Geostatistics with the Matern semivariogram model: A library of computer programs for inference, kriging and simulation}.
\newblock {\em Computers \& Geosciences}, 34(9):1073--1079.

\bibitem[Plumlee, 2014]{plumlee2014fast}
Plumlee, M. (2014).
\newblock {Fast prediction of deterministic functions using sparse grid experimental designs}.
\newblock {\em Journal of the American Statistical Association}, 109(508):1581--1591.

\bibitem[Plumlee, 2017]{plumlee2017bayesian}
Plumlee, M. (2017).
\newblock {Bayesian calibration of inexact computer models}.
\newblock {\em Journal of the American Statistical Association}, 112(519):1274--1285.

\bibitem[Quinonero-Candela and Rasmussen, 2005]{quinonero2005unifying}
Quinonero-Candela, J. and Rasmussen, C.~E. (2005).
\newblock {A unifying view of sparse approximate Gaussian process regression}.
\newblock {\em The Journal of Machine Learning Research}, 6:1939--1959.

\bibitem[Rahimi and Recht, 2007]{rahimi2007random}
Rahimi, A. and Recht, B. (2007).
\newblock {Random features for large-scale kernel machines}.
\newblock {\em Advances in neural information processing systems}, 20.

\bibitem[Rasmussen, 2003]{rasmussen2003gaussian}
Rasmussen, C.~E. (2003).
\newblock {Gaussian processes in machine learning}.
\newblock In {\em Summer school on machine learning}, pages 63--71. Springer.

\bibitem[Rieger and Wendland, 2017]{rieger2017sampling}
Rieger, C. and Wendland, H. (2017).
\newblock {Sampling inequalities for sparse grids}.
\newblock {\em Numerische Mathematik}, 136:439--466.

\bibitem[Roberts et~al., 2013]{roberts2013gaussian}
Roberts, S., Osborne, M., Ebden, M., Reece, S., Gibson, N., and Aigrain, S. (2013).
\newblock {Gaussian processes for time-series modelling}.
\newblock {\em Philosophical Transactions of the Royal Society A: Mathematical, Physical and Engineering Sciences}, 371(1984):20110550.

\bibitem[Saad, 2003]{saad2003iterative}
Saad, Y. (2003).
\newblock {\em {Iterative methods for sparse linear systems}}.
\newblock SIAM.

\bibitem[Saat{\c{c}}i, 2012]{saatcci2012scalable}
Saat{\c{c}}i, Y. (2012).
\newblock {\em {Scalable inference for structured Gaussian process models}}.
\newblock PhD thesis, Citeseer.

\bibitem[Santner et~al., 2003]{santner2003design}
Santner, T.~J., Williams, B.~J., Notz, W.~I., and Williams, B.~J. (2003).
\newblock {\em {The design and analysis of computer experiments}}, volume~1.
\newblock Springer.

\bibitem[Shewchuk et~al., 1994]{shewchuk1994introduction}
Shewchuk, J.~R. et~al. (1994).
\newblock {An introduction to the conjugate gradient method without the agonizing pain}.

\bibitem[Silverman, 1985]{silverman1985some}
Silverman, B.~W. (1985).
\newblock {Some aspects of the spline smoothing approach to non-parametric regression curve fitting}.
\newblock {\em Journal of the Royal Statistical Society: Series B (Methodological)}, 47(1):1--21.

\bibitem[Smolyak, 1963]{smolyak1963quadrature}
Smolyak, S.~A. (1963).
\newblock {Quadrature and interpolation formulas for tensor products of certain classes of functions}.
\newblock In {\em Doklady Akademii Nauk}, volume 148, pages 1042--1045. Russian Academy of Sciences.

\bibitem[Snoek et~al., 2012]{snoek2012practical}
Snoek, J., Larochelle, H., and Adams, R.~P. (2012).
\newblock {Practical bayesian optimization of machine learning algorithms}.
\newblock {\em Advances in neural information processing systems}, 25.

\bibitem[Spillane et~al., 2014]{spillane2014abstract}
Spillane, N., Dolean, V., Hauret, P., Nataf, F., Pechstein, C., and Scheichl, R. (2014).
\newblock {Abstract robust coarse spaces for systems of PDEs via generalized eigenproblems in the overlaps}.
\newblock {\em Numerische Mathematik}, 126:741--770.

\bibitem[Srinivas et~al., 2009]{srinivas2009gaussian}
Srinivas, N., Krause, A., Kakade, S.~M., and Seeger, M. (2009).
\newblock {Gaussian process optimization in the bandit setting: No regret and experimental design}.
\newblock {\em arXiv preprint arXiv:0912.3995}.

\bibitem[Stanton et~al., 2021]{stanton2021kernel}
Stanton, S., Maddox, W., Delbridge, I., and Wilson, A.~G. (2021).
\newblock {Kernel interpolation for scalable online Gaussian processes}.
\newblock In {\em International Conference on Artificial Intelligence and Statistics}, pages 3133--3141. PMLR.

\bibitem[Stein, 1999]{stein1999interpolation}
Stein, M.~L. (1999).
\newblock {\em {Interpolation of spatial data: some theory for kriging}}.
\newblock Springer Science \& Business Media.

\bibitem[Teckentrup, 2020]{teckentrup2020convergence}
Teckentrup, A.~L. (2020).
\newblock {Convergence of Gaussian process regression with estimated hyper-parameters and applications in Bayesian inverse problems}.
\newblock {\em SIAM/ASA Journal on Uncertainty Quantification}, 8(4):1310--1337.

\bibitem[Thompson, 1933]{thompson1933likelihood}
Thompson, W.~R. (1933).
\newblock {On the likelihood that one unknown probability exceeds another in view of the evidence of two samples}.
\newblock {\em Biometrika}, 25(3-4):285--294.

\bibitem[Titsias, 2009]{titsias2009variational}
Titsias, M. (2009).
\newblock {Variational learning of inducing variables in sparse Gaussian processes}.
\newblock In {\em Artificial intelligence and statistics}, pages 567--574. PMLR.

\bibitem[Toselli and Widlund, 2004]{toselli2004domain}
Toselli, A. and Widlund, O. (2004).
\newblock {\em {Domain decomposition methods-algorithms and theory}}, volume~34.
\newblock Springer Science \& Business Media.

\bibitem[Trefethen and Bau~III, 1997]{trefethen1997numerical}
Trefethen, L.~N. and Bau~III, D. (1997).
\newblock {Numerical linear algebra, vol. 50}.

\bibitem[Tuo and Wang, 2020]{tuo2020kriging}
Tuo, R. and Wang, W. (2020).
\newblock {Kriging prediction with isotropic Mat{\'e}rn correlations: Robustness and experimental designs}.
\newblock {\em The Journal of Machine Learning Research}, 21(1):7604--7641.

\bibitem[Tuo and Wu, 2015]{tuo2015efficient}
Tuo, R. and Wu, C.~J. (2015).
\newblock {Efficient calibration for imperfect computer models}.
\newblock {\em The Annals of Statistics}, 43(6):2331 -- 2352.

\bibitem[Ullrich, 2008]{ullrich2008smolyak}
Ullrich, T. (2008).
\newblock {Smolyak's algorithm, sampling on sparse grids and Sobolev spaces of dominating mixed smoothness}.
\newblock {\em East Journal of Approximations}, 14(1):1.

\bibitem[Van~der Vorst, 2003]{van2003iterative}
Van~der Vorst, H.~A. (2003).
\newblock {\em {Iterative Krylov methods for large linear systems}}.
\newblock Number~13. Cambridge University Press.

\bibitem[Wang et~al., 2019]{wang2019exact}
Wang, K., Pleiss, G., Gardner, J., Tyree, S., Weinberger, K.~Q., and Wilson, A.~G. (2019).
\newblock {Exact Gaussian processes on a million data points}.
\newblock {\em Advances in neural information processing systems}, 32.

\bibitem[Wang et~al., 2020]{wang2020prediction}
Wang, W., Tuo, R., and Jeff~Wu, C. (2020).
\newblock {On prediction properties of kriging: Uniform error bounds and robustness}.
\newblock {\em Journal of the American Statistical Association}, 115(530):920--930.

\bibitem[Wenger et~al., 2022a]{wenger2022preconditioning}
Wenger, J., Pleiss, G., Hennig, P., Cunningham, J., and Gardner, J. (2022a).
\newblock {Preconditioning for scalable Gaussian process hyperparameter optimization}.
\newblock In {\em International Conference on Machine Learning}, pages 23751--23780. PMLR.

\bibitem[Wenger et~al., 2022b]{wenger2022posterior}
Wenger, J., Pleiss, G., Pf{\"o}rtner, M., Hennig, P., and Cunningham, J.~P. (2022b).
\newblock {Posterior and computational uncertainty in gaussian processes}.
\newblock {\em Advances in Neural Information Processing Systems}, 35:10876--10890.

\bibitem[Williams and Seeger, 2000]{williams2000using}
Williams, C. and Seeger, M. (2000).
\newblock {Using the Nystr{\"o}m method to speed up kernel machines}.
\newblock {\em Advances in neural information processing systems}, 13.

\bibitem[Williams and Rasmussen, 2006]{williams2006gaussian}
Williams, C.~K. and Rasmussen, C.~E. (2006).
\newblock {\em {Gaussian processes for machine learning}}, volume~2.
\newblock MIT press Cambridge, MA.

\bibitem[Wilson and Nickisch, 2015]{wilson2015kernel}
Wilson, A. and Nickisch, H. (2015).
\newblock {Kernel interpolation for scalable structured Gaussian processes (KISS-GP)}.
\newblock In {\em International conference on machine learning}, pages 1775--1784. PMLR.

\bibitem[Wilson et~al., 2020]{wilson2020efficiently}
Wilson, J., Borovitskiy, V., Terenin, A., Mostowsky, P., and Deisenroth, M. (2020).
\newblock {Efficiently sampling functions from Gaussian process posteriors}.
\newblock In {\em International Conference on Machine Learning}, pages 10292--10302. PMLR.

\bibitem[Wilson et~al., 2021]{wilson2021pathwise}
Wilson, J.~T., Borovitskiy, V., Terenin, A., Mostowsky, P., and Deisenroth, M.~P. (2021).
\newblock {Pathwise Conditioning of Gaussian Processes}.
\newblock {\em J. Mach. Learn. Res.}, 22:105--1.

\bibitem[Xie and Xu, 2021]{xie2021bayesian}
Xie, F. and Xu, Y. (2021).
\newblock {Bayesian projected calibration of computer models}.
\newblock {\em Journal of the American Statistical Association}, 116(536):1965--1982.

\bibitem[Yadav et~al., 2022]{yadav2022kernel}
Yadav, M., Sheldon, D.~R., and Musco, C. (2022).
\newblock {Kernel Interpolation with Sparse Grids}.
\newblock {\em Advances in Neural Information Processing Systems}, 35:22883--22894.

\bibitem[Zafari et~al., 2020]{zafari2020resolving}
Zafari, S., Murashkina, M., Eerola, T., Sampo, J., K{\"a}lvi{\"a}inen, H., and Haario, H. (2020).
\newblock {Resolving overlapping convex objects in silhouette images by concavity analysis and Gaussian process}.
\newblock {\em Journal of Visual Communication and Image Representation}, 73:102962.

\end{thebibliography}

\end{document}